\newtheorem{theorem}{Theorem}
\begin{document}

\title{Achieving Covertness and Secrecy: A New Paradigm for Secure Wireless Communication}

\author{Huihui Wu, 
	Yuanyu Zhang, \IEEEmembership{Member, IEEE},
	Yulong Shen,  \IEEEmembership{Member, IEEE} and
	Xiaohong Jiang, \IEEEmembership{Senior Member, IEEE}
	\thanks{H. Wu is with the School of Computer Science and Technology, Xidian University, Xi'an, Shaanxi, China, and also with the School of Systems Information Science, Future University Hakodate, Hakodate, Hokkaido, Japan (Emails: hhwu2015@163.com).}
	\thanks{Y. Zhang is with the School of Computer Science and Technology, Xidian University, Xi'an, Shaanxi, China, and also with the Graduate School of Science and Technology, Nara Institute of Science and Technology, 8916-5 Takayama, Ikoma, Nara, 630-0192, Japan (Email: yy90zhang@ieee.org).}
	\thanks{Y. Shen is with the School of Computer Science and Technology, Xidian University, Xi'an, Shaanxi, China (Email: ylshen@mail.xidian.edu.cn).}
	\thanks{X. Jiang is with the School of Systems Information Science, Future University Hakodate, Hakodate, Hokkaido, Japan (Emails: jiang@fun.ac.jp).}
}

\maketitle

\begin{abstract}

This paper explores a new secure wireless communication paradigm where the physical layer security technology is applied to counteract both the detection and eavesdropping attacks, such that the critical covertness and secrecy properties of the communication are jointly guaranteed. We first provide theoretical modeling for covertness outage probability (COP), secrecy outage probability (SOP) and transmission probability (TP) to depict the covertness, secrecy and transmission performances of the paradigm. To understand the fundamental security performance under the new paradigm, we then define a new metric - covert secrecy rate (CSR), which characterizes the maximum transmission rate subject to the constraints of COP, SOP and TP. We further conduct detailed theoretical analysis to identify the CSR under various scenarios determined by the detector-eavesdropper relationships and the secure transmission schemes adopted by transmitters. Finally, numerical results are provided to illustrate the achievable performances under the new secure communication paradigm.

\end{abstract}

\begin{IEEEkeywords}
Wireless communication, covertness, secrecy, physical layer security.
\end{IEEEkeywords}

\IEEEpeerreviewmaketitle

\section{Introduction}

\IEEEPARstart{T}{he} fundamental research of wireless communication security is of great importance for the development of secure network communication, information security and communication privacy \cite{zou2016survey,djenouri2020survey}. It is notable that in modern secure wireless communication applications, covertness and secrecy serve as two typical properties. Covertness concerns with the protection of wireless communication from detection attacks that attempt to detect the existence of the communication \cite{bash2015hiding,yan2019low}, while secrecy deals with the protection of wireless communication from eavesdropping attacks \cite{wang2018survey,hamamreh2018classifications} which manage to intercept the information conveyed by the communication. With the wide application of secure wireless communication, how to ensure the covertness and secrecy of such communication has become an increasingly urgent demand.

Thanks to the rapid progress of information and communication technologies, physical layer security (PLS) technique is now regarded as a highly promising approach to counteract the detection and eavesdropping attacks and thus to ensure the covertness and secrecy properties of wireless communications. The basic principle behind the PLS technology is to exploit the inherent physical layer randomness of wireless channels (e.g., noise and fading) to implement the secure and covert communications \cite{liu2016physical}. For example, transmitters can intentionally inject artificial noise (AN) into their channels to hide their signals from detectors or to add uncertainty to the information intercepted by eavesdroppers. The PLS technology realizes secure wireless communications from the information-theoretic perspective and thus provides stronger form of covertness and secrecy guarantees than traditional security technologies like the cryptography and spread spectrum \cite{liang2009information,simon2002spread,forouzan2007cryptography}. Actually, the PLS technology serves as an effective supplement for the traditional security technologies to significantly improve the covertness and secrecy of wireless communications \cite{yan2019low,mukherjee2014principles}.

By now, extensive research efforts have been devoted to study of covertness or secrecy guarantee for wireless communication based on the PLS technology. In \cite{hu2018covert,forouzesh2019robust,shahzad2018achieving,zheng2020covert,sobers2017covert,li2020optimal}, the AN technique or cooperative jamming technique was adopted for covert wireless communication in the typical three-node scenario with a transmitter, a receiver and a malicious detector. In these works, the AN may be initiated by the transmitter \cite{hu2018covert,forouzesh2019robust}, by the (full-duplex) receiver \cite{shahzad2018achieving,zheng2020covert}, or by some external helper nodes \cite{sobers2017covert,li2020optimal} to avoid the communication signal from being detected by the detector. The works in \cite{lee2018covert,forouzesh2019robust,goeckel2016covert,he2017covert} show that the covert wireless communication can be implemented by exploiting the detector's uncertainty about its channel state information, like the instantaneous channel coefficient \cite{lee2018covert}, statistical channel coefficient \cite{forouzesh2019robust} or background noise \cite{goeckel2016covert,he2017covert}. Such uncertainty makes it difficult for the detector to determine the received signal power or the background noise power, and thus unable to distinguish between the scenarios with or without wireless communication by examining the power difference in these scenarios. Some recent works also explored the possibility of ensuring covertness based on other PLS technologies, such as multi-antenna technique \cite{shahzad2019covert,zheng2019multi}, coding scheme \cite{kibloff2019embedding,tahmasbi2019covert}, relay selection \cite{gao2021covert,su2020covert} and resource (i.e., channel use) allocation \cite{sun2020resource}.

The PLS technology has also been widely adopted for achieving secrecy in various wireless communication scenarios, such as ad-hoc networks \cite{zhang2021secure,sarkohaki2020efficient}, device-to-device (D2D) communications \cite{zhang2018mode,khoshafa2020secure}, cellular networks \cite{peng2021multiuser,abbas2020analysis} and the Internet of Things (IoT) \cite{zhang2015secure,khan2020efficient}. These works mainly exploited the application of AN technique to create a relatively better channel to the receiver than that to the eavesdropper with the aim of achieving a positive secrecy rate. In \cite{zhao2020secure,huanga2021transmit}, the beamforming technique was explored for secure wireless communication in multi-antenna scenarios, where the transmit power of signals was concentrated toward the direction of intended receiver such that a much better signal quality at the receiver can be created than that at the eavesdropper. The work in \cite{wang2020energy} further combined the beamforming and AN techniques to achieve a significant signal advantage at the receiver, while the works in \cite{he2019link,feng2018two} considered the multi-user scenarios and applied relay selection technique to create a transmitter-receiver channel advantage over the transmitter-eavesdropper channel. Some other works in \cite{liu2020secrecy,wu2018secrecy,wu2019energy} also studied the secure wireless communication based on the technique of resource allocation (e.g., power allocation, time slot allocation, energy allocation).

The above works help us understand the great potentials of the PLS technology in ensuring the covertness or secrecy of wireless communication. It is notable that these works mainly focus on the traditional paradigms of secure wireless communication where only one type of attack may exist, be it detection or eavesdropping, and concern with either the covertness guarantee or secrecy guarantee for wireless communications. In practice, however, both detection or eavesdropping attacks may coexist, especially in some critical communication scenarios consisting of multiple groups with common or conflicting interests, like military communications and coastal surveillance. Therefore, in this paper we are motivated to explore a new secure wireless communication paradigm where the PLS technology is applied to counteract both the detection and eavesdropping attacks. To the best of our knowledge, this is the first paper that studies the joint guarantee for the critical covertness and secrecy properties of wireless communications at the physical layer. The main contributions of this paper are summarized as follows.

\begin{itemize}
\item \textbf{\emph{A new secure wireless communication paradigm}:} In this paradigm, the PLS technology is applied to counteract both the detection and eavesdropping attacks and thus to jointly guarantee the covertness and secrecy properties of wireless communications. To demonstrate the new paradigm, we consider four representative communication scenarios of the paradigm, which are categorized by the detector-eavesdropper relationships (i.e., \emph{independence} and \emph{friend}) and the secure transmission schemes adopted by the transmitters (i.e., a \emph{power control (PC)-based} scheme and an \emph{AN-based} scheme). In the friend relationship case, the detector group and eavesdropper group share their signals received from the target transmitters in the hope of enhancing the attack performance of both sides, while in the independence relationship case, the two groups independently conduct their own attack without such signal sharing.

\item \textbf{\emph{Theoretical modeling for the new paradigm}:} To depict the covertness, secrecy and transmission performances of the new paradigm, for each concerned communication scenario we provide the corresponding theoretical modeling of covertness outage probability (COP) (i.e., the probability that detectors detect the transmitted signals), the secrecy outage probability (SOP) (i.e., the probability that eavesdroppers recover the conveyed information) and the transmission probability (TP) (i.e., the probability of conducting transmissions), respectively.

\item \textbf{\emph{A novel security metric characterizing the covertness, secrecy and transmission performances}:} This paper defines a novel security metric-\emph{covert secrecy rate} (CSR), which characterizes the maximum transmission rate subject to the constraints of COP, SOP and TP, and thus can serve as the fundamental security criterion for this new communication paradigm. We further conduct detailed theoretical analysis to identify the CSR for each of the four communication scenarios. Finally, extensive numerical results are provided to illustrate the CSR performances under the new secure communication paradigm.
\end{itemize}

The rest of this paper is organized as follows. Section \ref{2} presents an example system for the new paradigm and the definition of CSR. Theoretical analyses for the CSR performance under the four scenarios are given in Section \ref{3} and Section \ref{4}, respectively. Section \ref{5} provides numerical results to illustrate the CSR performances and Section \ref{6} concludes this paper.

\section{New Paradigm and Security Metric} \label{2}

To demonstrate the new secure wireless communication paradigm, we consider a system (as illustrated in Fig. \ref{transmission_scheme}) where a transmitter Alice sends messages to a receiver Bob in the presence of a detector Willie and  an eavesdropper Eve. Willie attempts to detect the existence of the signals transmitted from Alice, while Eve targets the messages contained in the signals. Alice and Bob operate in the half-duplex mode, while Willie and Eve can operate in the full-duplex mode. All nodes are assumed to be equipped with a single omnidirectional antenna. For notation simplicity, we use $a$, $b$, $e$ and $w$ to represent Alice, Bob, Eve and Willie, respectively, throughout this paper.

Time is divided into successive slots with the same duration that is long enough for Alice to transmit multiple symbols. To characterize the channels, we adopt the quasi-static Rayleigh fading channel model, where the channel coefficients remain constant in one slot and change independently from one slot to another at random. We use $h_{ij}$ to denote the coefficient of the channel from $i$ to $j$, where $i\in\{a, b, e, w\}$ and $j\in\{a, b, e, w\}$. As assumed in \cite{hu2018covert}, the corresponding channel gain $|h_{ij}|^2$ follows the exponential distribution with unit mean.
We assume that Alice and Bob know the \emph{instantaneous} and \emph{statistical} channel coefficient $h_{ab}$ but only the \emph{statistical} coefficients of other channels including those to Eve and Willie. We also assume that Eve knows the \emph{instantaneous} channel coefficient $h_{ae}$, while Willie knows only the \emph{statistical} channel coefficient of $h_{aw}$ and $h_{ew}$. These assumptions are widely used in previous research related to PLS and covert communication.

\subsection{Secure Transmission Schemes}\label{sec:sec-tran}
Alice employs two transmission schemes based on power control (PC) and artificial noise (AN), respectively. In the \emph{PC-based scheme}, Alice controls her transmit power $P_a$ in order to hide the message signals into the background noise to achieve covertness and secrecy. In the \emph{AN-based scheme}, Alice intentionally injects AN into the message signals to confuse Willie and Eve so as to reduce their attack effects. Different from the PC-based scheme, in the AN-based scheme, Alice uses a constant transmit power (also denoted by $P_a$) and splits the power between message and noise transmissions. We use $\rho\in(0,1]$ to denote the fraction of transmit power used for the message transmission. In addition to the strategies of transmit power, Alice also adopts the Wyner encoding scheme \cite{wyner1975wire} to resist the eavesdropping of Eve. To transmit a message, Alice chooses a target secrecy rate $R_s$ for this message and another rate $R_t$ for the whole transmitted symbol. The difference $R_t-R_s$ represents the rate sacrificed to confuse Eve. 

The goal of Alice is to ensure a positive and \emph{constant} secrecy rate $R_s$. Thus, Alice will send messages to Bob only when the instantaneous capacity $C_b$ of the Alice-Bob channel can support the secrecy rate $R_s$ (i.e., $C_b\geq R_s$). In this situation, Alice will set $R_t$ arbitrarily close to $C_b$ to cause as much confusion to Eve as possible, while ensuring reliable message transmission to Bob. Thus, the probability of Alice transmitting messages in a certain time slot can be defined as 
\begin{equation}\label{pto}
p_{tx} = \mathbb{P}(C_b \geq R_s).
\end{equation}
Note that the \textbf{\emph{transmission probability} (TP)} $p_{tx}$ can be interpreted as a metric to measure the transmission performance.

\subsection{Attacking Model}\label{sec:att-model}
In practice, Willie and Eve can belong to different organizations with unrelated or common goals, resulting in various relationships between them. In this paper, we consider two representative relationships, i.e., \emph{independence} and \emph{friend}. As shown in Fig. \ref{transmission_scheme}, in the independence relationship, Eve and Willie care only about their own attack without helping or hindering the other. In the friend relationship, Willie and Eve will share their signals received from Alice to help improve the attack power of the other.

\begin{figure}[!t]
\centering
\includegraphics[width=0.4\textwidth]{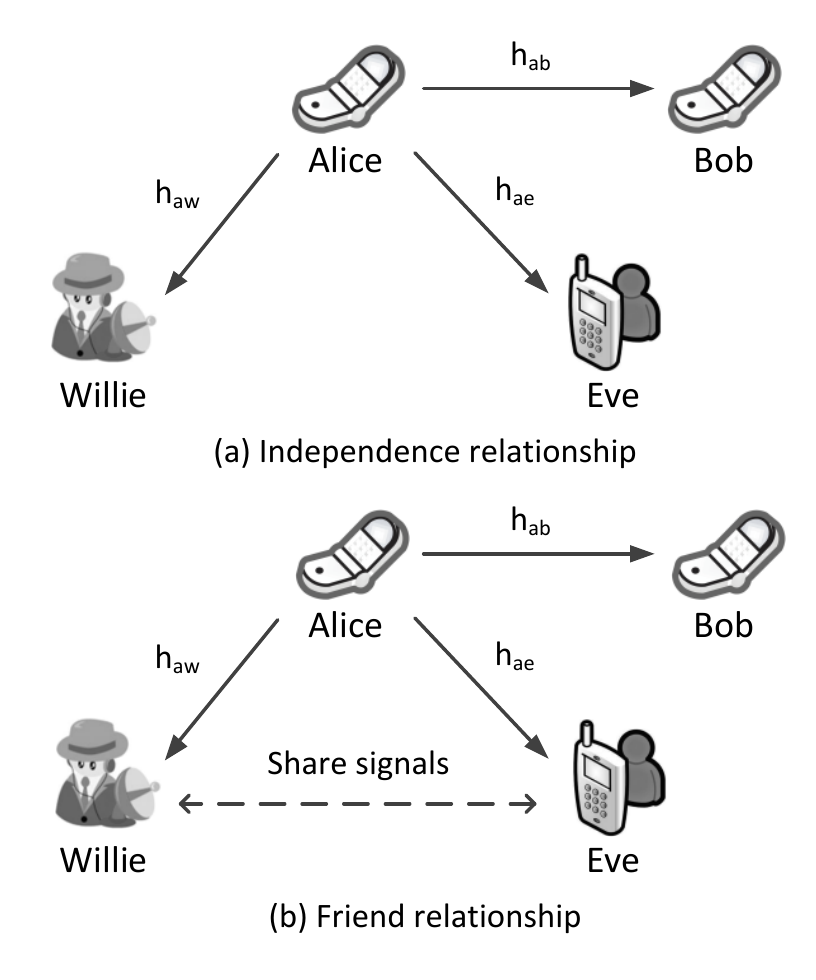}
\caption{Two relationships between Willie and Eve.}
\label{transmission_scheme}
\end{figure}

To detect the existence of signals transmitted from Alice in each slot, Willie adopts the commonly-used likelihood ratio test \cite{sobers2017covert}, in which he first determines a threshold $\theta$ and then measures the average power $\bar{P}_w$ of the symbols received from Alice in this slot. If $\bar{P}_w \ge \theta$, Willie accepts a hypothesis $\mathcal{H}_1$ that Alice transmitted messages to Bob in this slot. If $\bar{P}_w \le \theta$, Willie accepts a hypothesis $\mathcal{H}_0$ that Alice did not transmit messages. Formally, the likelihood ratio test can be given by
\begin{equation}
\bar{P}_w \mathop{\gtrless}\limits^{\mathcal{H}_1}_{\mathcal{H}_0} \theta.
\end{equation}
In general, the likelihood test introduces two types of detection errors. One is called \emph{false alarm}, which means that Willie reports a detected transmission whilst the transmission does not exist in fact. The other is called \emph{missed detection}, which means that Willie reports no detected transmission whilst the transmission exists indeed. We use $p_{FA}$ and $p_{MD}$ to denote the probabilities of false alarm and missed detection, respectively. If neither false alarm nor missed detection occurs, the transmission from Alice to Bob is said to suffer from covertness outage.  Thus, the \textbf{\emph{covertness outage probability} (COP)} is given by
\begin{equation}
p_{co} = 1-(p_{FA} + p_{MD}).
\end{equation}
The smaller the COP is, the higher the covertness of the transmission is. Note that $1-p_{co}$ can be interpreted as the detection error probability of Willie.

Compared with the detection of Willie, the eavesdropping attack of Eve is relatively simpler. To intercept the transmitted messages, Eve tries to decode the signals received from Alice. If Eve is able to recover the messages (i.e., the instantaneous secrecy capacity $C_s$ \cite{cover2012elements} of the Alice-Bob channel falls below the target secrecy rate $R_s$), the transmission from Alice to Bob is said to suffer from secrecy outage.
Note that secrecy outage occurs only when Alice actually transmits a message (i.e., $C_b\geq R_s$). Thus, we can define the \textbf{\emph{secrecy outage probability} (SOP)} as the following conditional probability: 
\begin{equation}\label{pso}
p_{so} = \mathbb{P}(C_s < R_s ~|~ C_b\geq R_s).
\end{equation}
Similarly, the smaller the SOP is, the stronger the secrecy of the transmission is.

\subsection{Covert Secrecy Rate}\label{sec:csr}
To understand the fundamental security performance under the new paradigm, we propose a novel metric, called \textbf{\emph{covert secrecy rate} (CSR)}, by jointly considering the covertness, secrecy and transmission performances. The CSR is defined as the maximum transmission rate under which the constraints of COP, SOP and TP can be ensured. To obtain the CSR, we formulate two optimization problems for the PC-based and AN-based transmission schemes, respectively, which are given by
\begin{subequations}\label{PC-based}
\begin{align}
\textbf{P1 (PC-based):}\quad R_{cs}=  \max\limits_{P_a \text{,} R_s} ~ & R_s p_{tx}(P_a,R_s) , \label{problem:csr-pc}\\
\text{s.t.}~ & ~ p_{co} (P_a)\le  \epsilon_c, \label{con:cop-pc}\\
& ~ p_{so}(R_s) \le \epsilon_s, \label{con:sop-pc}\\
& ~ p_{tx}(P_a,R_s) \ge 1-\epsilon_t, \label{con:top-pc}
\end{align}
\end{subequations}
and
\begin{subequations}\label{AN-based}
\begin{align}
\textbf{P2 (AN-based):} \ \  R_{cs} = \max \limits_{\rho\in[0,1],R_s}  & R_s p_{tx}(\rho,R_s),  \label{problem:csr-an}\\
\text{s.t.}~~ & p_{co} (\rho)\le  \epsilon_c, \label{con:cop-an}\\
&  p_{so} (\rho,R_s)\le  \epsilon_s, \label{con:sop-an}\\
&  p_{tx}(\rho,R_s) \ge 1-\epsilon_t, \label{con:top-an}
\end{align}
\end{subequations}
where $R_{cs}$ denotes the CSR, $\epsilon_c$, $\epsilon_s$ and $\epsilon_t$ denote the constraints of COP, SOP and TP. Note that Problem P1 optimizes the transmission rate over the transmit power $P_a$ and the secrecy rate $R_s$,
while Problem P2 conducts the optimization over the power allocation parameter $\rho$ and the secrecy rate $R_s$.

\section{CSR Analysis: Independence Relationship Case}\label{3}
In this section, we investigate the CSR performance under the independence relationship case, for which we focus on the PC-based and AN-based transmission schemes in Subsections \ref{sec:ind-pc} and \ref{sec:ind-an}, respectively.

\subsection{PC-Based Transmission Scheme}\label{sec:ind-pc}
As mentioned in Section \ref{sec:sec-tran}, Alice decides to transmit in a certain time slot only when the instantaneous capacity $C_b$ of Alice-Bob channel can support the secrecy rate $R_s$. To do this, Alice measures the instantaneous channel coefficient $|h_{ab}|^2$ and determines the Alice-Bob channel capacity $C_b$ based on the well-known Shannon Capacity formula \cite{cover2012elements}, i.e.,
\begin{equation}\label{eqn:cb-ind-pc}
C_b = \log \left(1+\frac{P_a |h_{ab}|^2}{ \sigma_{b}^2}\right),
\end{equation}
where $\log$ is to the base of $2$. 
Since $|h_{ab}|^2$ is exponentially distributed, the transmission probability $p_{tx}$ of Alice under the PC-based transmission scheme is 
\begin{equation}\label{eqn:pto-ind-pc}
p_{tx}^{\mathrm{IP}}(P_a,\!R_s) \!=\! \mathbb{P}\! \left(C_b \ge R_s \right) \!\!=\!\exp\! \left(\! - \frac{(2^{R_s}\!\!-\!1) \sigma_{b}^2}{P_a}\!\right).
\end{equation}

When Alice chooses to transmit, she sends $n$ symbols to Bob, represented by a complex vector $\mathbf x$, where each symbol $\mathbf x[i]$ ($i=1,2,\cdots,n$) is subject to the unit power constraint, i.e., $\mathbb{E}[\lvert \mathbf x[i] \rvert^2] = 1$.
Thus, the signal vectors received at Bob, Willie and Eve are given by
\begin{equation}\label{y_bwe}
\mathbf y_{\kappa} =\sqrt{P_a}h_{a\kappa} \mathbf x + \mathbf n_\kappa,
\end{equation}
where the subscript $\kappa\in\{b, w, e\}$ stands for Bob, Willie or Eve, $a$ represents Alice, and $\mathbf n_\kappa$ denotes the noise at $\kappa$ with the $i$-th element $\mathbf n_\kappa[i]$ being  the complex additive Gaussian noise with zero mean and variance $\sigma_{\kappa}^2$, i.e., $\mathbf n_\kappa[i] \sim \mathcal{CN}(0,\sigma_{\kappa}^2)$.

According to the detection scheme in Subsection \ref{sec:att-model}, Willie makes a decision on the existence of transmitted signals based on the average power $\bar{P}_w$ of the received symbols $\mathbf y_{w}$.
In this case, $\bar{P}_w$ is given by
\begin{align}\label{pw_ind_pc}
\bar{P}_w &= \frac{\sum_{i = 1}^{n} \lvert \mathbf y_w[i] \rvert^2}{n} = \lim_{n \to \infty}(P_a |h_{aw}|^2  + \sigma_{w}^2)\chi_{2n}^2/n \nonumber \\
&= P_a |h_{aw}|^2  + \sigma_{w}^2,
\end{align}
where $\chi_{2n}^2$ is a chi-squared random variable with $2n$ degrees of freedom.
By the Strong Law of Large Numbers \cite{browder2012mathematical}, $\frac{\chi_{2n}^2}{n}$ converges in probability to $1$ as $n$ tends to infinity.
If $\bar{P}_w \le \theta$, Willie accepts the hypothesis $\mathcal{H}_0$ that Alice did not transmit messages, leading to a missed detection.
Thus, the probability of missed detection $p_{MD}$ is given by
\begin{align}\label{eqn:pmd-ind-pc}
p_{MD} & = \mathbb{P} \left( P_a |h_{aw}|^2  + \sigma_{w}^2 \le \theta \right) \nonumber \\
& =
\begin{cases}
1- \exp \left( - \frac{\theta - \sigma_{w}^2}{P_a} \right), & \theta > \sigma_{w}^2, \\
0, & \theta \le \sigma_{w}^2.
\end{cases}
\end{align}

The eavesdropping result of Eve depends on the instantaneous secrecy capacity $C_s$ of the Alice-Bob channel, which is the difference between the channel capacity of the Alice-Bob channel and that of the Alice-Eve channel \cite{cover2012elements}. Thus, $C_s$ is formulated as
\begin{equation}
C_s = \log \left(1+\frac{P_a |h_{ab}|^2}{ \sigma_{b}^2}\right)-\log\left(1+\frac{P_a |h_{ae}|^2}{\sigma_{e}^2}\right).
\end{equation}
Note that $|h_{ab}|^2$ and $|h_{ae}|^2$ are random variables here. Based on the definition of the SOP in Subsection \ref{sec:att-model}, the SOP  under the PC-based scheme can be given by
\begin{align}\label{eqn:sop-ind-pc}
p_{so}^{\mathrm{IP}}\!(\!R_s\!) &\!\!= \frac{\mathbb{P} \left(R_s<C_b<C_e+R_s\right)}{\mathbb{P} \left( C_b > R_s\right)} = 1 - \frac{\mathbb{P} \left(C_s>R_s\right)}{\mathbb{P} \left( C_b > R_s\right)} \nonumber \\
& \!\!=\!\!1 \!\!-\! e^ {\frac{(\!2^{R_s}-1\!) \sigma_{b}^2}{P_a} }\!\mathbb{P}\! \left(\!\frac{P_a\! |h_{ab}|^2}{ \sigma_{b}^2}\!-\! \frac{2^{R_s}\!\! P_a\! |h_{ae}|^2}{ \sigma_{e}^2} \!>\! 2^{R_s}\!\!\!-\!1\!\right) \nonumber \\
&\!\!= \frac{2^{R_s} \sigma_{b}^2}{2^{R_s} \sigma_{b}^2 +\sigma_{e}^2}.
\end{align}

When Alice does not transmit, security performance is not a concern and thus we only focus on the covertness performance. In this case, Willie receives only noise, i.e., $\mathbf y_{w}=\mathbf n_w$ and thus the the average power $\bar{P}_w$ of the received symbols $\mathbf y_{w}$ is $\bar{P}_w =\sigma_w^2$. If $\bar{P}_w \ge \theta$, Willie accepts the hypothesis $\mathcal{H}_1$ that Alice transmitted messages, leading to a false alarm.
Thus, the probability of false alarm $p_{FA}$ is given by
\begin{align}\label{eqn:pfa-ind-pc}
p_{FA}  = \mathbb{P} \left(\sigma_{w}^2 \ge \theta \right) =
\begin{cases}
0, & \theta > \sigma_{w}^2, \\
1, & \theta \le \sigma_{w}^2.
\end{cases}
\end{align}

Combining the $p_{MD}$ in (\ref{eqn:pmd-ind-pc}) and the $p_{FA}$ in (\ref{eqn:pfa-ind-pc}), we obtain the COP under the PC-based scheme as
\begin{equation} \label{eqn:cop-ind-pc}
p_{co}^{\mathrm{IP}}(P_a, \theta)=
\begin{cases}
\exp \left( - \frac{\theta - \sigma_{w}^2}{P_a} \right), & \theta > \sigma_{w}^2, \\
0, & \theta \le \sigma_{w}^2.
\end{cases}
\end{equation}
Note that the COP is identical for Alice and Willie, since they have the same knowledge about $|h_{aw}|^2$, i.e., the statistical $|h_{aw}|^2$.
To maximize the COP $p_{co}^{\mathrm{IP}}$, Willie will choose the optimal detection threshold $\theta$, denoted by $\theta^*_{\mathrm{IP}}$.
We can see from (\ref{eqn:cop-ind-pc}) that $p_{co}^{\mathrm{IP}}$ is a decreasing function of $\theta$ and is larger than or equal to $0$ for $\theta > \sigma_{w}^2$.
Thus, the optimal $\theta^*_{\mathrm{IP}}$ exists in $(\sigma_{w}^2, \infty)$ and is thus given by $\theta^*_{\mathrm{IP}}=\upsilon+\sigma_{w}^2$, where $\upsilon>0$ is an arbitrarily small value.

Under the condition that Willie chooses the optimal detection threshold $\theta^*_{\mathrm{IP}}$, Alice solves the optimization problem in (\ref{PC-based}) to obtain the CSR.
The main result is summarized in the following theorem.

\begin{theorem}\label{theorem:csr-ind-pc}
Under the scenario where Willie and Eve are in the independence relationship and Alice adopts the PC-based secure transmission scheme, the CSR of the system can be given by \eqref{csr_ind_pc},
\begin{figure*}[t]	 
\begin{equation}\label{csr_ind_pc}
R_{cs}^{\mathrm{IP}} =
\begin{cases}
\frac{1}{\ln 2} \mathrm{W}_0  \left(- \frac{\upsilon}{\sigma_b^2\ln \epsilon_c}\right) \exp \left(-\frac{1}{\mathrm{W}_0  \left(- \frac{\upsilon}{\sigma_b^2\ln \epsilon_c}\right)}- \frac{\sigma_b^2\ln \epsilon_c}{\upsilon}\right), & 
R^*_{s,\mathrm{IP}}=R_{s,\mathrm{IP}}^0 \le \min\left\{R^{\text{SOP}}_{s,\mathrm{IP}},R^{\text{TP}}_{s,\mathrm{IP}}\right\}, \\
\log\left(\frac{\sigma_e^2 \epsilon_s}{(1- \epsilon_s)\sigma_b^2}\right) \exp\left( \frac{\left(\sigma_e^2 \epsilon_s-(1-\epsilon_s)\sigma_b^2\right)\ln \epsilon_c}{(1-\epsilon_s)\upsilon}\right), & R^*_{s,\mathrm{IP}}=R^{\text{SOP}}_{s,\mathrm{IP}} \le \min\left\{R_{s,\mathrm{IP}}^0,R^{\text{TP}}_{s,\mathrm{IP}}\right\}, \\
(1-\epsilon_t)\log\left(1+\frac{\upsilon \ln(1-\epsilon_t)}{\sigma_b^2 \ln \epsilon_c}\right)
, & R^*_{s,\mathrm{IP}}=R^{\text{TP}}_{s,\mathrm{IP}} \le \min\left\{R_{s,\mathrm{IP}}^0,R^{\text{SOP}}_{s,\mathrm{IP}}\right\},
\end{cases}
\end{equation}
{\noindent} 
\rule[-10pt]{18.07cm}{0.05em}
\end{figure*}
where
\begin{equation}\label{IP_Rs_sop}
R^{\text{SOP}}_{s,\mathrm{IP}} = \log\left(\frac{\sigma_e^2 \epsilon_s}{(1-\epsilon_s)\sigma_b^2}\right),
\end{equation}
\begin{equation}\label{IP_Rs_top}
R^{\text{TP}}_{s,\mathrm{IP}}  = \log\left(1-\frac{P_{a,\mathrm{IP}}^*\ln(1-\epsilon_t)}{\sigma_b^2}\right),
\end{equation}
\begin{equation}\label{IP_Rs_Rcs}
R_{s,\mathrm{IP}}^0 = \frac{1}{\ln 2} \mathrm{W}_0 \! \left( \frac{P_{a,\mathrm{IP}}^*}{\sigma_b^2}\right),
\end{equation}
$\mathrm{W}_0 (\cdot)$ is the principal branch of Lambert's W function,
and $P_{a,\mathrm{IP}}^*=-\frac{\upsilon}{\ln \epsilon_c}$ is the optimal transmit power.
\end{theorem}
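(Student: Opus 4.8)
The plan is to solve the optimization problem \textbf{P1} in (\ref{PC-based}) by first collapsing the three probabilistic constraints into explicit bounds on the decision variables, then decoupling the choice of $P_a$ from that of $R_s$, and finally carrying out a one-dimensional unimodal maximization over $R_s$ whose interior stationary point is expressible through Lambert's W function. To begin, I would substitute Willie's optimal threshold $\theta^*_{\mathrm{IP}}=\upsilon+\sigma_w^2$ into the COP expression (\ref{eqn:cop-ind-pc}) to obtain $p_{co}^{\mathrm{IP}}(P_a)=\exp(-\upsilon/P_a)$, and then translate each constraint into an inequality. The COP constraint (\ref{con:cop-pc}) becomes $P_a\le -\upsilon/\ln\epsilon_c$ (noting $\ln\epsilon_c<0$), the SOP constraint (\ref{con:sop-pc}) together with (\ref{eqn:sop-ind-pc}) becomes $R_s\le R^{\text{SOP}}_{s,\mathrm{IP}}$, and the TP constraint (\ref{con:top-pc}) together with (\ref{eqn:pto-ind-pc}) becomes $R_s\le\log(1-P_a\ln(1-\epsilon_t)/\sigma_b^2)$.

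Next I would argue that $P_a$ should be pushed to its COP-limited upper bound. For any fixed $R_s$, the objective $R_s\,p_{tx}^{\mathrm{IP}}(P_a,R_s)$ is increasing in $P_a$ because $p_{tx}^{\mathrm{IP}}$ in (\ref{eqn:pto-ind-pc}) is increasing in $P_a$; moreover the TP-induced upper bound on $R_s$ is itself increasing in $P_a$ while the SOP bound is independent of $P_a$. Hence enlarging $P_a$ simultaneously raises the objective and relaxes the feasible set for $R_s$, so the optimal power is $P_{a,\mathrm{IP}}^*=-\upsilon/\ln\epsilon_c$, which pins down $R^{\text{TP}}_{s,\mathrm{IP}}$ as in (\ref{IP_Rs_top}).

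With $P_a$ fixed at $P_{a,\mathrm{IP}}^*$, I would treat $f(R_s):=R_s\exp(-(2^{R_s}-1)\sigma_b^2/P_{a,\mathrm{IP}}^*)$ on the interval $[0,R_{\max}]$ with $R_{\max}=\min\{R^{\text{SOP}}_{s,\mathrm{IP}},R^{\text{TP}}_{s,\mathrm{IP}}\}$. Writing $c=\sigma_b^2/P_{a,\mathrm{IP}}^*$, the derivative factors as $f'(R_s)=e^{-c(2^{R_s}-1)}\bigl(1-c\ln2\,R_s2^{R_s}\bigr)$, so the sign of $f'$ is governed by $g(R_s)=1-c\ln2\,R_s2^{R_s}$, which decreases strictly from $g(0)=1$ to $-\infty$; thus $f$ is unimodal with a unique interior maximizer $R_{s,\mathrm{IP}}^0$. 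Solving $g(R_{s,\mathrm{IP}}^0)=0$ via the substitution $u=R_s\ln2$ gives $ue^u=1/c$, i.e. $R_{s,\mathrm{IP}}^0=\mathrm{W}_0(1/c)/\ln2$, which is exactly (\ref{IP_Rs_Rcs}). The main obstacle is recognizing this Lambert-W reduction and establishing unimodality, since these together guarantee that the constrained optimum is simply $R^*_{s,\mathrm{IP}}=\min\{R_{s,\mathrm{IP}}^0,R^{\text{SOP}}_{s,\mathrm{IP}},R^{\text{TP}}_{s,\mathrm{IP}}\}$: if the interior peak lies inside $[0,R_{\max}]$ it is attained there, otherwise $f$ is increasing throughout and the optimum sits at the binding boundary.

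Finally I would split into the three cases according to which of the three quantities is smallest and substitute $R^*_{s,\mathrm{IP}}$ back into $R_s\,p_{tx}^{\mathrm{IP}}$. In the interior case, using the identity $2^{R_{s,\mathrm{IP}}^0}=1/(c\,\mathrm{W}_0(1/c))$ collapses the exponent to $-1/\mathrm{W}_0(1/c)+c$ and, after inserting $c=-\sigma_b^2\ln\epsilon_c/\upsilon$, yields the first branch of (\ref{csr_ind_pc}). The SOP-binding case follows by direct substitution of $R^{\text{SOP}}_{s,\mathrm{IP}}$ from (\ref{IP_Rs_sop}), and the TP-binding case simplifies immediately because $p_{tx}^{\mathrm{IP}}=1-\epsilon_t$ precisely when $R_s=R^{\text{TP}}_{s,\mathrm{IP}}$, leaving the factor $(1-\epsilon_t)$ multiplying $R^{\text{TP}}_{s,\mathrm{IP}}$ as stated.
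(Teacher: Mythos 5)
Your proposal is correct and follows essentially the same route as the paper's proof: fix Willie's threshold, push $P_a$ to the COP-imposed upper bound $-\upsilon/\ln\epsilon_c$ since the objective and the TP-induced feasible region for $R_s$ both improve with $P_a$, then solve the one-dimensional problem in $R_s$ by locating the Lambert-W stationary point of the unimodal objective and taking the minimum of it with $R^{\text{SOP}}_{s,\mathrm{IP}}$ and $R^{\text{TP}}_{s,\mathrm{IP}}$. Your explicit verification of unimodality via the strictly decreasing factor $g(R_s)=1-c\ln 2\,R_s 2^{R_s}$ is a slightly more careful justification of a step the paper merely asserts, but it is not a different argument.
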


\begin{proof}
As can be seen from \eqref{problem:csr-pc}, the optimal transmit power $P_a$ and optimal target secrecy rate $R_s$ are required to solve the optimization problem P1. We first derive the optimal $P_a$. 
It is easy to see from \eqref{eqn:pto-ind-pc} and \eqref{eqn:cop-ind-pc} that both $p_{tx}^{\mathrm{IP}}$ and $p_{co}^{\mathrm{IP}}$ monotonically increase as $P_a$ increases. Thus, the covertness constraint in (\ref{con:cop-pc}) results in an upper bound on $P_a$, which is
\begin{equation}
P_{a,\mathrm{IP}}^{\max}=-\frac{\upsilon}{\ln \epsilon_c},
\end{equation}
and the TP constraint in (\ref{con:top-pc}) leads to a lower bound on $P_a$, which is
\begin{equation}
P_{a,\mathrm{IP}}^{\min}=-\frac{(2^{R_s}-1)\sigma_b^2}{\ln (1-\epsilon_t)}.
\end{equation}
Note that the inequality $P_{a,\mathrm{IP}}^{\min}\le P_{a,\mathrm{IP}}^{\max}$ must hold, which gives the following condition on $R_s$:
\begin{equation}\label{IP_Rs_pa}
R_s \leq \log\left(1+\frac{\upsilon \ln(1-\epsilon_t)}{\sigma_b^2 \ln \epsilon_c}\right).
\end{equation}
Since the objective function in (\ref{problem:csr-pc}) is an increasing function of $P_a$, the optimal $P_a$ is the upper bound, i.e., $P_{a,\mathrm{IP}}^*=P_{a,\mathrm{IP}}^{\max}$.


Next, we derive the optimal $R_s$ by analyzing the feasible region of $R_s$ and the monotonicity of the objective function with respect to $R_s$. We can see that as $R_s$ increases, $p_{tx}^{\mathrm{IP}}$ in \eqref{eqn:pto-ind-pc} monotonically decreases while $p_{so}^{\mathrm{IP}}$ in \eqref{eqn:sop-ind-pc} monotonically increases. Thus, based on the constraints (\ref{con:sop-pc}) and (\ref{con:top-pc}), the regions of $R_s$ for ensuring secrecy and transmission performances are $[0,R^{\text{SOP}}_{s,\mathrm{IP}}]$ and $[0,R^{\text{TP}}_{s,\mathrm{IP}} ]$ with $R^{\text{SOP}}_{s,\mathrm{IP}}$ and $R^{\text{TP}}_{s,\mathrm{IP}} $ given by (\ref{IP_Rs_sop}) and (\ref{IP_Rs_top}), respectively. Note that $R^{\text{TP}}_{s,\mathrm{IP}} $ is obtained at $P_a=P_{a,\mathrm{IP}}^*=-\frac{\upsilon}{\ln \epsilon_c}$ and thus the region  $[0,R^{\text{TP}}_{s,\mathrm{IP}} ]$ is equivalent to  \eqref{IP_Rs_pa}. Hence, the feasible region of $R_s$ is $[0,\min\{R^{\text{SOP}}_{s,\mathrm{IP}}, R^{\text{TP}}_{s,\mathrm{IP}} \}]$.
Taking the first derivative of the objective function in (\ref{problem:csr-pc}) in terms of $R_s$ gives
\begin{equation}\label{IP_Rcs_par_Rs}
\frac{\partial R_{cs}}{\partial R_s} = \left( 1\!-\!\frac{R_s 2^{R_s} \sigma_b^2 \ln 2}{P_a}\right) \exp\left( -\frac{(2^{R_s}\!-\!1)\sigma_b^2}{P_a}\right).
\end{equation}
Solving $\frac{\partial R_{cs}}{\partial R_s}=0$, we can obtain the stationary point $R_{s,\mathrm{IP}}^0$ in (\ref{IP_Rs_Rcs}).
We can see that the objective function is increasing over $[0,R_{s,\mathrm{IP}}^0 )$ and decreasing over $[R_{s,\mathrm{IP}}^0,\infty)$. 
This implies that if $R_{s,\mathrm{IP}}^0$ falls inside the feasible region of $R_s$, i.e., $R_{s,\mathrm{IP}}^0\leq \min\{R^{\text{SOP}}_{s,\mathrm{IP}}, R^{\text{TP}}_{s,\mathrm{IP}} \}$, the optimal $R_s$ is $R^*_{s,\mathrm{IP}}=R_{s,\mathrm{IP}}^0$. Otherwise, the optimal $R_s$ is $R^*_{s,\mathrm{IP}}=\min\{R^{\text{SOP}}_{s,\mathrm{IP}}, R^{\text{TP}}_{s,\mathrm{IP}} \}$. 
Finally, substituting the optimal $P_a$ and $R_s$ into the objective function in (\ref{problem:csr-pc}) completes the proof.
\end{proof}

\subsection{AN-Based Transmission Scheme} \label{sec:ind-an}
Suppose Alice transmits, in addition to the message symbols, she will also inject AN, represented by a complex vector $\mathbf z$, where each symbol $\mathbf z[i]$ ($i=1,2,\cdots,n$) is subject to the unit power constraint, i.e., $\mathbb{E}[\lvert \mathbf z[i] \rvert^2] = 1$.
Alice will use a fraction $\rho$ of her transmit power $P_a$ for message transmission and the remaining power for AN radiation.
Thus, the signal vectors received at Bob will be given by
\begin{equation}\label{eqn:yb-ind-an}
\mathbf y_{b} = \sqrt{\rho P_a}h_{ab}\mathbf x +\sqrt{(1-\rho) P_a}h_{ab}\mathbf z + \mathbf n_b.
\end{equation}

Based on (\ref{eqn:yb-ind-an}), Alice measures the instantaneous Alice-Bob channel capacity $C_b$ as
\begin{equation}\label{eqn:cb-ind-an}
C_b = \log \left(1+\frac{\rho P_a |h_{ab}|^2}{(1-\rho) P_a |h_{ab}|^2+ \sigma_{b}^2}\right),
\end{equation}
and decides to transmit when $C_b\ge R_s$.
Thus, the transmission probability under the AN-based scheme can be given by
\begin{align}\label{eqn:pto-ind-an}
p_{tx}^{\mathrm{IA}}(\rho,R_s) &=\mathbb{P} \left( C_b\geq R_s\right)\nonumber\\
&= \mathbb{P} \left( \frac{\rho P_a |h_{ab}|^2}{(1-\rho) P_a |h_{ab}|^2+ \sigma_{b}^2} \ge 2^{R_s}-1 \right) \nonumber \\
&= \exp \left(-\frac{(2^{R_s}-1) \sigma_{b}^2}{\rho P_a-(2^{R_s}-1)(1-\rho)P_a}\right).
\end{align}

Next, we analyze the secrecy and covertness performances when Alice transmits messages. In this situation, the signal vectors received at Willie and Eve have the same form of that received at Bob, which are given by
\begin{equation}\label{eqn:ybwe-ind-an}
\mathbf y_{\kappa} = \sqrt{\rho P_a}h_{a\kappa}\mathbf x +\sqrt{(1-\rho) P_a}h_{a\kappa}\mathbf z + \mathbf n_\kappa,
\end{equation}
where the subscript $\kappa\in\{w, e\}$ stands for Willie or Eve. From (\ref{eqn:ybwe-ind-an}), we can see that the average power $\bar{P}_w$ of the received symbols $\mathbf y_{\kappa}$ at Willie is the same as that given in (\ref{pw_ind_pc}).
Thus, the probability of missed detection $p_{MD}$ under the AN-based scheme can also be given by (\ref{eqn:pmd-ind-pc}).

According to (\ref{eqn:ybwe-ind-an}), the secrecy capacity $C_s$ under the AN-based scheme can be formulated as
\begin{equation}
C_s \!=\! \log\!\! \left(\!\!1\!\!+\!\!\frac{\rho P_a |h_{ab}|^2}{(\!1\!-\!\rho\!)\! P_a\! |h_{ab}|^2\!\!+\!\! \sigma_{b}^2}\!\right)\!\!-\!\!\log\!\left(\!\!1\!\!+\!\!\frac{\rho P_a |h_{ae}|^2}{(\!1\!-\!\rho\!) \!P_a\! |h_{ae}|^2\!+\!\sigma_{e}^2}\!\right)\!.
\end{equation}
Thus, following the definition of SOP in (\ref{pso}), we derive the SOP under the AN-based scheme as
\begin{align}\label{eqn:sop-ind-an}
&p_{so}^{\mathrm{IA}}(\rho,\!R_s)  \!\!=\!\! 1- \exp \left( \frac{(2^{R_s}-1) \sigma_{b}^2}{\rho P_a-(2^{R_s}-1)(1-\rho)P_a}\right) \\
&\!\!\!\!\ \ \ \times\!\mathbb{P} \bigg(\bigg.\! \!\frac{\rho P_a |h_{ab}|^2}{(\!1\!\!-\!\!\rho\!)\! P_a\! |h_{ab}|^2\!\!+\!\! \sigma_{b}^2} \!\!-\!\! \frac{2^{R_s} \rho P_a |h_{ae}|^2}{(\!1\!\!-\!\!\rho\!)\! P_a\! |h_{ae}|^2\!\!+\!\!\sigma_{e}^2}\!\!>\!\!2^{R_s}\!\!-\!\!1 \!\!\bigg.\bigg) \nonumber \\
& \!\!=\!\!1\!-\! \exp\!\left(\! \frac{(2^{R_s}-1) \sigma_{b}^2}{\rho P_a\!\!-\!\!(\!2^{R_s}\!-\!1\!)(\!1\!\!-\!\!\rho\!)\!P_a}\!-\!\frac{(2^{R_s}\!+\!\rho\!-\!1)\sigma_{b}^2}{(\!1\!-\!2^{R_s}\!)(\!1\!\!-\!\!\rho\!)\!P_a}\!\right)\nonumber \\
&\!\!\!\! \ \ \ \int_{0}^{\phi}\!\exp\!\! \left(\!\frac{\frac{(\!2^{R_s}\!+\!\rho-\!1\!)(\!1-(\!1\!-\!\rho\!)2^{R_s}\!)\sigma_{b}^2\sigma_{e}^2}{(1-2^{R_s})(1\!-\!\rho)}\!-\!(\!2^{R_s}\!\!-\!\!1\!)\sigma_{b}^2\sigma_{e}^2}{(\!1\!-\!2^{R_s}\!)(\!1\!\!-\!\!\rho\!)P_a^2 y\!+\! (\!1\!-\!(\!1\!\!-\!\!\rho\!)2^{R_s}\!) P_a\sigma_{e}^2}\!-\!y\right) \mathrm{d}y, \nonumber
\end{align}
where $\phi=\frac{(1- 2^{R_s}(1-\rho))\sigma_{e}^2}{(2^{R_s}-1)(1-\rho)P_a}$.

Finally, we analyze the covertness performance when Alice does not transmit messages. In this situation, Alice still generates AN to confuse Willie, which is different from the PC-based scheme.
Thus, the signal vector $\mathbf y_{w}$ received by Willie consists of both the AN $\mathbf z$ and background noise, i.e.,
\begin{equation}\label{eqn:yW-half-ind-an}
\mathbf y_{w}=\sqrt{(1-\rho)P_a}h_{aw}\mathbf z +\mathbf n_w.
\end{equation}
In this case, the average power of the received symbols of Willie is $\bar{P}_w =(1-\rho)P_a|h_{aw}|^2+ \sigma_w^2$, and thus the probability of false alarm is given by
\begin{align}\label{eqn:pfa-ind-an}
p_{FA}  &= \mathbb{P} \left((1-\rho)P_a|h_{aw}|^2+ \sigma_w^2 \ge \theta \right) \nonumber \\
&=
\begin{cases}
\exp \left( - \frac{ (\theta - \sigma_{w}^2)}{(1 -\rho)P_a} \right), & \theta > \sigma_{w}^2, \\
1, & \theta  \le \sigma_{w}^2.
\end{cases}
\end{align}

Combining the $p_{FA}$ in (\ref{eqn:pfa-ind-an}) and the $p_{MD}$ in (\ref{eqn:pmd-ind-pc}), we obtain the COP $p_{co}^{\mathrm{IA}}$ under the AN-based scheme as
\begin{equation} \label{eqn:cop-ind-an}
p_{co}^{\mathrm{IA}}(\rho, \theta) \!=\!
\begin{cases}
\!\exp\! \left(\!-\frac{ (\theta-\sigma_{w}^2)}{P_a}\! \right)\!-\! \exp\! \left(\!-\frac{ (\theta - \sigma_{w}^2)}{(1\!-\!\rho)P_a}\!\right),\!\!\! &\theta \!>\! \sigma_{w}^2, \\
\!0,\!\!\! & \theta \!\le\! \sigma_{w}^2.
\end{cases}
\end{equation}

We can see from (\ref{eqn:cop-ind-an}) that the optimal detection threshold $\theta^*_{\mathrm{IA}}$ for Willie exists when $\theta>\sigma_{w}^2$ and can be obtained by solving $\frac{\partial p_{co}^{\mathrm{IA}}}{\partial \theta}=0$.
Thus, $\theta^*_{\mathrm{IA}}$ is given by
\begin{equation}\label{theta_ind_an}
\theta^*_{\mathrm{IA}} = \sigma_{w}^2+ \frac{(\rho -1)P_a}{\rho}\ln(1-\rho).
\end{equation}

By solving the optimization problem in (\ref{AN-based}) with  $\theta=\theta^*_{\mathrm{IA}}$, we can obtain the CSR, which is given in the following theorem.
\begin{theorem}\label{theorem:csr-ind-an}
Under the scenario where Willie and Eve are in the independence relationship and Alice adopts the AN-based secure transmission scheme, the CSR of the system is
\begin{equation}\label{eqn:csr-ind-an}
R_{cs}^{\mathrm{IA}} \!\!=\!\!R^*_{s,\mathrm{IA}}(\rho^*_{\mathrm{IA}})\!\exp\! \left(\!\!- \frac{(2^{R^*_{s,\mathrm{IA}}(\rho^*_{\mathrm{IA}})}-1) \sigma_{b}^2}{\rho^*_{\mathrm{IA}}\! P_a\!-\! (\!2^{R^*_{s,\mathrm{IA}}\!(\rho^*_{\mathrm{IA}})}\!-\!1\!)(\!1\!-\!\rho^*_{\mathrm{IA}}\!)P_a}\!\right),
\end{equation}
where $\rho^*_{\mathrm{IA}}$ is the optimal power allocation parameter and $R^*_{s,\mathrm{IA}}$ is the optimal secrecy rate. 
Here, $\rho^*_{\mathrm{IA}}$ can be obtained by solving $p_{co}^{\mathrm{IA}} (\rho, \theta^*_{\mathrm{IA}})= \epsilon_c$ with $\theta^*_{\mathrm{IA}}$ given by (\ref{theta_ind_an}).
$R^*_{s,\mathrm{IA}}$ is given by
\begin{align}\label{rs_opt_IA}
R^*_{s,\mathrm{IA}}\!(\rho^*_{\mathrm{IA}})\!=\!\!
\begin{cases}
\!R_{s,\mathrm{IA}}^0(\rho^*_{\mathrm{IA}}),\!\!\!\!\!
&R^*_{s,\mathrm{IA}}\!\!=\!\!R_{s,\mathrm{IA}}^0 \!\le\! \min\!\left\{\!R^{\text{SOP}}_{s,\mathrm{IA}}\!,R^{\text{TP}}_{s,\mathrm{IA}}\!\right\}\!, \\
\!R^{\text{SOP}}_{s,\mathrm{IA}}(\rho^*_{\mathrm{IA}}),\!\!\!\!\! &R^*_{s,\mathrm{IA}}\!\!=\!\!R^{\text{SOP}}_{s,\mathrm{IA}} \!\le\! \min\!\left\{\!R_{s,\mathrm{IA}}^0\!,R^{\text{TP}}_{s,\mathrm{IA}}\!\right\}\!, \\
\!R^{\text{TP}}_{s,\mathrm{IA}}(\rho^*_{\mathrm{IA}}),\!\!\!\!\! &R^*_{s,\mathrm{IA}}\!\!=\!\!R^{\text{TP}}_{s,\mathrm{IA}} \!\le\! \min\!\left\{\!R_{s,\mathrm{IA}}^0\!,R^{\text{SOP}}_{s,\mathrm{IA}}\!\right\}\!,
\end{cases}
\end{align}
where the stationary point $R_{s,\mathrm{IA}}^0$ can be obtained by solving $\frac{\partial R_{cs}}{\partial R_s}=0$, $R^{\text{SOP}}_{s,\mathrm{IA}}$ is the solution of $p_{so}^{\mathrm{IA}}(R_s)=\epsilon_s$ and $R^{\text{TP}}_{s,\mathrm{IA}}$ is given by
\begin{equation}\label{rs_top_IA}
R^{\text{TP}}_{s,\mathrm{IA}}(\rho^*_{\mathrm{IA}}) =\log\left( \frac{P_a\ln(1-\epsilon_t)- \sigma_{b}^2}{(1-\rho^*_{\mathrm{IA}}) P_a \ln(1- \epsilon_t)-\sigma_{b}^2}\right).
\end{equation}
\end{theorem}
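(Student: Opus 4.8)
The plan is to follow the same two-stage decomposition used in the proof of Theorem~\ref{theorem:csr-ind-pc}: first pin down the optimal power-allocation parameter $\rho^*_{\mathrm{IA}}$, and then, with $\rho$ fixed at that value, optimize the secrecy rate $R_s$. The structural feature enabling this split is that, unlike the PC case, the COP in \eqref{eqn:cop-ind-an} depends only on $\rho$ (and the threshold $\theta$), whereas $p_{tx}^{\mathrm{IA}}$ in \eqref{eqn:pto-ind-an} and $p_{so}^{\mathrm{IA}}$ in \eqref{eqn:sop-ind-an} involve both $\rho$ and $R_s$. Hence the covertness constraint \eqref{con:cop-an} acts as a pure upper bound on $\rho$.

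For the first stage I would substitute the optimal detector threshold $\theta^*_{\mathrm{IA}}$ from \eqref{theta_ind_an} into \eqref{eqn:cop-ind-an}. A short computation collapses the two exponentials into the single-variable expression $p_{co}^{\mathrm{IA}}(\rho,\theta^*_{\mathrm{IA}}) = \rho\,(1-\rho)^{(1-\rho)/\rho}$. I would then show this is strictly increasing on $(0,1)$: differentiating its logarithm $\ln\rho + \tfrac{1-\rho}{\rho}\ln(1-\rho)$ gives $-\ln(1-\rho)/\rho^2 > 0$, and the value sweeps from $0$ to $1$ as $\rho$ ranges over $(0,1)$, so $p_{co}^{\mathrm{IA}}(\rho,\theta^*_{\mathrm{IA}}) = \epsilon_c$ has a unique root. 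Because the objective $R_s\,p_{tx}^{\mathrm{IA}}$ is increasing in $\rho$ for fixed $R_s$ (larger $\rho$ enlarges the denominator $\rho P_a-(2^{R_s}-1)(1-\rho)P_a$ in \eqref{eqn:pto-ind-an}, raising $p_{tx}^{\mathrm{IA}}$), while the TP constraint only lower-bounds $\rho$, the covertness constraint is active at the optimum and fixes $\rho^*_{\mathrm{IA}}$ as this unique root.

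For the second stage, with $\rho = \rho^*_{\mathrm{IA}}$, I would reproduce the feasible-region and monotonicity argument of Theorem~\ref{theorem:csr-ind-pc}. Since $p_{so}^{\mathrm{IA}}$ increases and $p_{tx}^{\mathrm{IA}}$ decreases in $R_s$, the SOP and TP constraints \eqref{con:sop-an} and \eqref{con:top-an} yield a feasible interval $[0,\min\{R^{\text{SOP}}_{s,\mathrm{IA}},R^{\text{TP}}_{s,\mathrm{IA}}\}]$, where $R^{\text{TP}}_{s,\mathrm{IA}}$ is obtained in closed form \eqref{rs_top_IA} by inverting $p_{tx}^{\mathrm{IA}} = 1-\epsilon_t$ and $R^{\text{SOP}}_{s,\mathrm{IA}}$ is the root of $p_{so}^{\mathrm{IA}}(R_s)=\epsilon_s$. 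Setting $\partial R_{cs}/\partial R_s = 0$ produces the interior stationary point $R_{s,\mathrm{IA}}^0$; establishing that the objective is unimodal (increasing then decreasing) then delivers the three-way split in \eqref{rs_opt_IA}: the optimal $R_s$ equals $R_{s,\mathrm{IA}}^0$ when it lies in the feasible interval, and otherwise the tighter of the two boundaries. Substituting $\rho^*_{\mathrm{IA}}$ and $R^*_{s,\mathrm{IA}}$ into the objective of \eqref{problem:csr-an} yields \eqref{eqn:csr-ind-an}.

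The main obstacle is that the AN-based SOP \eqref{eqn:sop-ind-an} is an integral with no elementary antiderivative, so, unlike the PC case, neither $R^{\text{SOP}}_{s,\mathrm{IA}}$ nor $R_{s,\mathrm{IA}}^0$ (nor $\rho^*_{\mathrm{IA}}$) admits a Lambert-W-type closed form; each must be characterized implicitly as the root of its defining equation and solved numerically. The same intractability makes the decoupling delicate: strictly, pushing $\rho$ to the covertness boundary is globally optimal only if the $R_s$-optimized objective is monotone in $\rho$, and since increasing $\rho$ simultaneously raises $p_{tx}^{\mathrm{IA}}$ and tightens $R^{\text{SOP}}_{s,\mathrm{IA}}$ through $p_{so}^{\mathrm{IA}}$, confirming this requires controlling the sign of the derivative of the intractable SOP in $\rho$. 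I would either verify this monotonicity numerically across the operating regime or argue that within the feasible region the SOP constraint bounds $R_s$ rather than competing with the COP for $\rho$, so that covertness remains the binding upper bound on $\rho$.
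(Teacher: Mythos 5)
Your proposal follows essentially the same route as the paper's proof: substitute $\theta^*_{\mathrm{IA}}$ into the COP to get $\rho(1-\rho)^{(1-\rho)/\rho}$, show it is increasing so the covertness constraint pins $\rho^*_{\mathrm{IA}}$ at its unique root of $p_{co}^{\mathrm{IA}}=\epsilon_c$ (the TP and SOP constraints only lower-bound $\rho$ while the objective increases in $\rho$), and then rerun the Theorem~\ref{theorem:csr-ind-pc} feasible-region/unimodality argument for $R_s$. The decoupling worry you raise at the end is resolved exactly as the paper does it — for each fixed $R_s$ the inner optimum over $\rho$ is the COP boundary, which does not depend on $R_s$ — so no further monotonicity check of the SOP in $\rho$ beyond the paper's asserted one is needed.
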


\begin{proof}
The proof follows the same idea as the one for Theorem \ref{theorem:csr-ind-pc}. The only difference is to derive the optimal power allocation parameter $\rho$ instead of optimal transmit power $P_a$. Here, we focus on the derivation of the optimal $\rho$ and omit the analysis of the optimal $R_s$. 
We can see that the objective function in (\ref{problem:csr-an}) is an increasing function of $\rho$, implying that  the upper bound on $\rho$ is needed. 
Substituting $\theta=\theta^*_{\mathrm{IA}}$ into (\ref{eqn:cop-ind-an}) yields
\begin{equation}\label{pco_ind_re}
p_{co}^{\mathrm{IA}} = \rho(1-\rho)^{\frac{1-\rho}{\rho}}.
\end{equation}
Taking the first derivative of  (\ref{pco_ind_re}) in terms of $\rho$, we have
\begin{equation}
\frac{\partial p_{co}^{\mathrm{IA}}}{\partial \rho} = \frac{-\ln (1-\rho)}{\rho} \left(1-\rho\right)^{\frac{1-\rho}{\rho}} >0,
\end{equation}
which shows that $p_{co}^{\mathrm{IA}}$ is an increasing function of $\rho$. We can see from \eqref{eqn:pto-ind-an} and \eqref{eqn:sop-ind-an} that $p_{tx}^{\mathrm{IA}}$ is also an increasing function of $\rho$, while $\rho^{\text{SOP}}_{\mathrm{IA}}$ is a decreasing function. Thus, only the covertness constraint $\eqref{con:cop-an}$ gives an upper bound $\rho^{\max}_{\mathrm{IA}}$ on $\rho$, while the TP and SOP constraints in \eqref{con:top-an}) and \eqref{con:sop-an} give two lower bounds $\rho^{\text{TP}}_{\mathrm{IA}}$ and $\rho^{\text{SOP}}_{\mathrm{IA}}$ respectively. Hence, the optimal $\rho$ is $\rho^*_{\mathrm{IA}}=\rho^{\max}_{\mathrm{IA}}$. Note that $\rho^{\max}_{\mathrm{IA}}\ge\max \left\{\rho^{\text{TP}}_{\mathrm{IA}},\rho^{\text{SOP}}_{\mathrm{IA}}\right\}$ must hold, which imposes a constraint (or region) on $R_s$. However, this region is equivalent to the one obtained from the TP and SOP constraints in \eqref{con:top-an}) and \eqref{con:sop-an}, and thus can be neglected in the analysis of optimal $R_s$.
\end{proof}

\section{CSR Analysis: Friend Relationship Case} \label{4}
The CSR performance of the friend relationship case is investigated in this section, for which the CSR analyses for the PC-based and AN-based transmission schemes are provided in Subsections \ref{sec:fri-pc} and \ref{sec:fri-an}, respectively.
To depict the friend relationship, we interpret Willie and Eve as two antennas of a super attacker.
This model is widely used to characterize the collusion among eavesdroppers \cite{cumanan2013secrecy}.

\subsection{PC-Based Transmission Scheme}\label{sec:fri-pc}
Alice follows the same decision process as introduced in Section \ref{sec:ind-pc} to decide whether to transmit messages or not. Note that the instantaneous Alice-Bob channel capacity $C_b$ in this case is identical to that in \eqref{eqn:cb-ind-pc}, which means that the transmission probability is also the same.
Thus, the transmission probability $p_{tx}^{\mathrm{FP}}$ in the friend relationship scenario under the PC-based scheme is given by \eqref{eqn:pto-ind-pc}.

Next, we analyze the covertness and secrecy performances when Alice transmits messages.
When Alice chooses to transmit a signal vector $\mathbf x$, Willie and Eve receive the same signal vectors $\mathbf y_{w}$ and $\mathbf y_{e}$ as that given in (\ref{y_bwe}).
Since Willie and Eve share their received signals in this case, the signal vectors received at Willie and Eve contain the one from the other side.  
Thus, based on the signal vector $\mathbf y_{\kappa}$ in (\ref{y_bwe}), the average power of the received symbols at Willie can be given by $\bar{P}_w = \sum_{\kappa\in\{w, e\}}|\mathbf y_{\kappa}|^2=P_a |h_{aw}|^2 + P_a |h_{ae}|^2  + \sigma_{e}^2+ \sigma_{w}^2$.
Note that $|h_{aw}|^2$ and $|h_{ae}|^2$ are random variables for Willie.
Thus, the probability of missed detection $p_{MD}$ is given by
\begin{align}\label{pmd_fri_pc}
p_{MD} & = \mathbb{P} \left( P_a |h_{aw}|^2 + P_a |h_{ae}|^2  + \sigma_{e}^2+ \sigma_{w}^2 \le \theta \right) \\
& =
\begin{cases}
1\!-\!\frac{P_a\!+\theta-\sigma_{e}^2\!- \sigma_{w}^2}{P_a}\exp\!\left(\!- \frac{\theta-\sigma_{e}^2- \sigma_{w}^2}{P_a}\right), & \theta > \sigma_{e}^2\!+\! \sigma_{w}^2, \\
0, & \theta \le \sigma_{e}^2\!+\! \sigma_{w}^2.
\end{cases}\nonumber
\end{align}

According to \cite{zhang2015secure}, the signal sharing results in an improved Signal-to-Noise Ratio (SNR) for Eve, which is $\frac{P_a |h_{ae}|^2+P_a |h_{aw}|^2}{ \sigma_{e}^2+\sigma_{w}^2}$.
Thus, the secrecy capacity $C_s$ is
\begin{equation}\label{eqn:sc-frd}
C_s \!=\! \log \left(\!1\!+\!\frac{P_a |h_{ab}|^2}{ \sigma_{b}^2}\!\right) \!-\!\log\left(\!1\!+\!\frac{P_a |h_{ae}|^2+P_a |h_{aw}|^2}{ \sigma_{e}^2+\sigma_{w}^2}\!\right).
\end{equation}
Since $|h_{ab}|^2$, $|h_{ae}|^2$ and $|h_{aw}|^2$ are independent, the SOP under the PC-based scheme is given by
\begin{align}\label{pso_fri_pc}
p_{so}^{\mathrm{FP}}(R_s) &=1 - \exp \left( \frac{(2^{R_s}-1) \sigma_{b}^2}{P_a} \right) \nonumber \\
&\times\mathbb{P}\!\left(\!\frac{P_a |h_{ab}|^2}{\sigma_{b}^2} \!-\! 2^{R_s}\frac{P_a |h_{aw}|^2\!+\!P_a |h_{ae}|^2}{ \sigma_{w}^2\!+\!\sigma_{e}^2} \!>\!2^{R_s}\!-\!1\!\right) \nonumber \\
&= \frac{2^{R_s}\sigma_{b}^2(2^{R_s}\sigma_{b}^2 +2\sigma_{w}^2+2\sigma_{e}^2)}{(2^{R_s}\sigma_{b}^2+\sigma_{w}^2+\sigma_{e}^2)^2}.
\end{align}

Finally, we focus on the covertness performance when Alice suspends her transmission. Since the decision of suspending transmission is \emph{unknown} to Willie and Eve, they still share their signals, which contain only background noises.
Thus, the received signal at Willie is given by $\mathbf y_{w} =\mathbf n_e +\mathbf n_w$ and the average received power  is $\bar{P}_w= \sigma_e^2+ \sigma_w^2$. Hence, the probability of false alarm $p_{FA}$ can be given by
\begin{align}\label{pfa_fri_pc}
p_{FA} = \mathbb{P} \left(\sigma_e^2+ \sigma_w^2 \ge \theta \right) =
\begin{cases}
0, & \theta > \sigma_e^2+ \sigma_w^2, \\
1, & \theta \le \sigma_e^2+ \sigma_w^2.
\end{cases}
\end{align}

Combining the $p_{FA}$ in (\ref{pfa_fri_pc}) and the $p_{MD}$ in (\ref{pmd_fri_pc}), we obtain the COP as
\begin{equation} \label{pco_fri_pc}
p_{co}^{\mathrm{FP}} (P_a,\!\theta)\!\!=\!\!
\begin{cases}
\!\frac{P_a+\theta-\sigma_{e}^2- \sigma_{w}^2}{P_a}\exp\!\left(\!- \frac{\theta-\sigma_{e}^2- \sigma_{w}^2}{P_a}\!\right), \!\!& \theta > \sigma_{e}^2+ \sigma_{w}^2,\\
\!0, \!\!& \theta \le \sigma_{e}^2+ \sigma_{w}^2.
\end{cases}
\end{equation}
Taking the derivative of the $p_{co}^{\mathrm{FP}}$ in (\ref{pco_fri_pc}) gives
\begin{align}
\frac{\partial p_{co}^{\mathrm{FP}}}{\partial \theta}=-\frac{\theta-\sigma_{e}^2- \sigma_{w}^2}{P_a^2}\exp\left(-\frac{\theta-\sigma_{e}^2- \sigma_{w}^2}{P_a}\right).
\end{align}
This shows that $p_{co}^{\mathrm{FP}}$ is a decreasing function of $\theta$ when $\theta > \sigma_{e}^2+ \sigma_{w}^2$.
Thus, the optimal detection threshold is
\begin{align}\label{eqn:opt-theta-fri-pc}
\theta^*_{\mathrm{FP}}=\upsilon+\sigma_{e}^2+ \sigma_{w}^2,
\end{align}
where $\upsilon>0$ is an arbitrarily small value.

Given the $\theta^*_{\mathrm{FP}}$, the $p_{tx}$ in (\ref{eqn:pto-ind-pc}), the SOP in (\ref{pso_fri_pc}) and the COP in (\ref{pco_fri_pc}), the problem in (\ref{PC-based}) can now be solved to obtain the CSR.
The result is given in the following theorem.
\begin{theorem}
Under the scenario where Willie and Eve are in the friend relationship and Alice adopts the PC-based secure transmission scheme, the CSR of the system is given in \eqref{csr_fri_pc},
\begin{figure*}[t]
\begin{align}\label{csr_fri_pc}
R_{cs}^{\mathrm{FP}}\!\!=\!\!
\begin{cases}
\!\frac{1}{\ln 2} \mathrm{W}_0 \! \left(\!- \frac{\upsilon}{\left(1+\mathrm{W}_{-1}\!(-\frac{\epsilon_c}{\boldsymbol{e}})\right)\sigma_b^2}\!\right) \!\exp\! \left(\!-\frac{1}{\mathrm{W}_0 \! \left(\!- \frac{\upsilon}{\left(1+\mathrm{W}_{-1}\!(-\frac{\epsilon_c}{\boldsymbol{e}})\right)\sigma_b^2}\! \right)}\!-\! \frac{\left(1+\mathrm{W}_{-1}\!(-\frac{\epsilon_c}{\boldsymbol{e}})\right)\sigma_b^2}{\upsilon}\!\right)\!,\!\!\!\! &R^*_{s,\mathrm{FP}}\!=\!R_{s,\mathrm{FP}}^0 \!\le\! \min\!\left\{R^{\text{SOP}}_{s,\mathrm{FP}},R^{\text{TP}}_{s,\mathrm{FP}}\right\}\!, \\
\!\log\!\left(\frac{(1-\sqrt{1-\epsilon_s})(\sigma_w^2\!+\!\sigma_e^2)}{\sigma_b^2\sqrt{1-\epsilon_s}}\right) \exp\!\left(\! \frac{\left((1-\sqrt{1-\epsilon_s})(\sigma_w^2\!+\!\sigma_e^2)-\sqrt{1-\epsilon_s}\sigma_b^2\right)\left(1+\mathrm{W}_{-1}\!(-\frac{\epsilon_c}{\boldsymbol{e}})\right)}{\upsilon\sqrt{1-\epsilon_s}}\!\right)\!,\!\!\!\! &R^*_{s,\mathrm{FP}}\!=\!R^{\text{SOP}}_{s,\mathrm{FP}} \!\le\! \min\!\left\{R_{s,\mathrm{FP}}^0,R^{\text{TP}}_{s,\mathrm{FP}}\right\}\!, \\
\!(1-\epsilon_t)\log\left(1+\frac{\upsilon \ln(1-\epsilon_t)}{\sigma_b^2 \left(1+\mathrm{W}_{-1}\!(-\frac{\epsilon_c}{\boldsymbol{e}})\right)}\right)\!,\!\!\!\! &R^*_{s,\mathrm{FP}}\!=\!R^{\text{TP}}_{s,\mathrm{FP}} \!\le\! \min\!\left\{R_{s,\mathrm{FP}}^0,R^{\text{SOP}}_{s,\mathrm{FP}}\right\}\!,
\end{cases}
\end{align}
{\noindent} 
\rule[-10pt]{18.07cm}{0.05em}
\end{figure*}
Here,
\begin{equation} \label{FP_Rs_sop}
R^{\text{SOP}}_{s,\mathrm{FP}} = \log\left(\frac{(1-\sqrt{1-\epsilon_s})(\sigma_w^2+\sigma_e^2)}{\sigma_b^2\sqrt{1-\epsilon_s}}\right),
\end{equation}
$R^{\text{TP}}_{s,\mathrm{FP}}$ and $R_{s,\mathrm{FP}}^0$ are the same as those given in (\ref{IP_Rs_top}) and (\ref{IP_Rs_Rcs}), respectively, with the optimal transmit power $P_{a,\mathrm{FP}}^*$ given by
\begin{equation}
P_{a,\mathrm{FP}}^*=-\frac{\upsilon}{1+\mathrm{W}_{-1}(-\frac{\epsilon_c}{\boldsymbol{e}})}.
\end{equation}
$\mathrm{W}_0 (\cdot)$ and $\mathrm{W}_{-1} (\cdot)$ are the principal branch and the non-principle branch of Lambert's W function, respectively, and $\boldsymbol{e}$ is Euler's number.
\end{theorem}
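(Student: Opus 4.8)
The plan is to follow exactly the template established in the proof of Theorem~\ref{theorem:csr-ind-pc}: since the objective in \eqref{problem:csr-pc} depends on the two decision variables $P_a$ and $R_s$, I would first fix $R_s$ and extract the optimal transmit power from the covertness and transmission constraints, and then optimise over $R_s$. The only genuinely new ingredient is the covertness analysis, because here the COP \eqref{pco_fri_pc} is a \emph{product} of a linear term and an exponential rather than the pure exponential of the independence case, so inverting the covertness constraint is no longer elementary.

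Concretely, I would first substitute the optimal detection threshold $\theta^*_{\mathrm{FP}}=\upsilon+\sigma_e^2+\sigma_w^2$ from \eqref{eqn:opt-theta-fri-pc} into \eqref{pco_fri_pc}, which collapses the COP to $p_{co}^{\mathrm{FP}}=\left(1+\tfrac{\upsilon}{P_a}\right)\exp\!\left(-\tfrac{\upsilon}{P_a}\right)$. Writing $u=\upsilon/P_a>0$ and differentiating $(1+u)e^{-u}$ shows it decreases in $u$ and hence increases in $P_a$, so that the covertness constraint \eqref{con:cop-pc} again yields an \emph{upper} bound on $P_a$. \textbf{The main obstacle is this inversion.} Setting $(1+u)e^{-u}=\epsilon_c$ and substituting $v=-(1+u)$ turns it into $ve^{v}=-\epsilon_c/\boldsymbol{e}$; because $u>0$ forces $v<-1$, the relevant root lies on the \emph{non-principal} branch, giving $v=\mathrm{W}_{-1}(-\epsilon_c/\boldsymbol{e})$ and therefore $P_{a,\mathrm{FP}}^*=-\upsilon/\bigl(1+\mathrm{W}_{-1}(-\tfrac{\epsilon_c}{\boldsymbol{e}})\bigr)$. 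Identifying the correct branch is the delicate step; the rest is routine. The transmission constraint \eqref{con:top-pc} gives the same lower bound $P_{a,\mathrm{FP}}^{\min}=-(2^{R_s}-1)\sigma_b^2/\ln(1-\epsilon_t)$ as in the independence case, and since the objective increases in $P_a$ the optimum is the upper bound, $P_{a,\mathrm{FP}}^*=P_{a,\mathrm{FP}}^{\max}$.

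For the secrecy rate I would exploit that the transmission probability is unchanged from \eqref{eqn:pto-ind-pc}, so the objective has the identical functional form as in Theorem~\ref{theorem:csr-ind-pc}; its stationary point $R_{s,\mathrm{FP}}^0$ therefore coincides with \eqref{IP_Rs_Rcs} once $P_{a,\mathrm{FP}}^*$ is inserted, and the transmission bound $R^{\text{TP}}_{s,\mathrm{FP}}$ coincides with \eqref{IP_Rs_top}. The new SOP \eqref{pso_fri_pc} I would rewrite, with $t=2^{R_s}\sigma_b^2$ and $c=\sigma_w^2+\sigma_e^2$, as $p_{so}^{\mathrm{FP}}=1-c^2/(t+c)^2$, which is manifestly increasing in $R_s$; solving $p_{so}^{\mathrm{FP}}=\epsilon_s$ then produces $R^{\text{SOP}}_{s,\mathrm{FP}}$ in \eqref{FP_Rs_sop}. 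As in the independence case the objective rises on $[0,R_{s,\mathrm{FP}}^0)$ and falls afterwards, so the optimal $R_s$ is $R_{s,\mathrm{FP}}^0$ when it lies inside the feasible region $[0,\min\{R^{\text{SOP}}_{s,\mathrm{FP}},R^{\text{TP}}_{s,\mathrm{FP}}\}]$ and equals the active bound otherwise, which generates the three cases. Finally, substituting each optimal pair $(P_{a,\mathrm{FP}}^*,R_s^*)$ into the objective and simplifying the first branch via the defining identity $\mathrm{W}_0(z)e^{\mathrm{W}_0(z)}=z$ yields the closed forms in \eqref{csr_fri_pc}, completing the proof.
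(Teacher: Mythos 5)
Your proposal is correct and follows exactly the route the paper intends (it states only that the proof is ``similar to that of Theorem~\ref{theorem:csr-ind-pc}'' and omits the details): the optimal power is the upper bound forced by the covertness constraint, the feasible region of $R_s$ is cut by the SOP and TP bounds, and the unimodality of $R_s p_{tx}$ selects among the three cases. You also correctly supply the two genuinely new computations the paper leaves implicit --- inverting $(1+\upsilon/P_a)e^{-\upsilon/P_a}=\epsilon_c$ on the branch $\mathrm{W}_{-1}$ (the root satisfies $v<-1$, so the non-principal branch is indeed the right one) and rewriting the friend-case SOP as $1-c^2/(t+c)^2$ to obtain \eqref{FP_Rs_sop} --- and both check out against \eqref{csr_fri_pc}.
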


\begin{proof}
The proof is similar to that of Theorem \ref{theorem:csr-ind-pc} and thus omitted here.
\end{proof}

\subsection{AN-Based Transmission Scheme}\label{sec:fri-an}
We first derive the transmission probability to characterize the transmission performance of the transmission. 
Suppose Alice transmits under the AN-based scheme, Bob will receive the same signal as that given in \eqref{eqn:ybwe-ind-an}, yielding the same instantaneous Alice-Bob channel capacity $C_b$ as that given in \eqref{eqn:cb-ind-an}.
This means that the transmission probability $p_{tx}^{\mathrm{FA}}$ under the AN-based scheme in the friend relationship scenario is identical to that in the independence scenario, which is given in \eqref{eqn:pto-ind-an}.

We proceed to analyze the miss detection probability and SOP when Alice transmits messages.
When Alice transmits a signal vector $\mathbf x$, the signal vectors at Willie and Eve are the same as that given in \eqref{eqn:ybwe-ind-an}. After receiving the shared signals from Eve, the average power $\bar{P}_w$ of the received symbols at Willie is given by $\bar{P}_w = \sum_{\kappa\in\{w, e\}}|\mathbf y_{\kappa}|^2= P_a |h_{ae}|^2+P_a |h_{aw}|^2+\sigma_{e}^2+\sigma_{w}^2$, which is identical to \eqref{pw_ind_pc}, i.e., the average power in the independence case. Thus, the probability of missed detection $p_{MD}$  can be given by (\ref{pmd_fri_pc}).

After Eve receives the signals from Willie, the Signal-to-Noise-plus-Interference Ratio (SINR) is  
\begin{align}
\frac{\rho P_a |h_{ae}|^2\!+\!\rho P_a |h_{aw}|^2}{(\!1\!\!-\!\!\rho\!) P_a |h_{ae}|^2\!+\!(\!1\!\!-\!\!\rho\!) P_a |h_{aw}|^2\!+\!\sigma_{e}^2\!+\!\sigma_{w}^2}.
\end{align}
Thus, the secrecy capacity $C_s$ under the AN-based scheme is
\begin{align}
C_s \!=\!&\log \left(1+\frac{\rho P_a |h_{ab}|^2}{(1-\rho) P_a |h_{ab}|^2+ \sigma_{b}^2}\right)  \\
&-\log\bigg(\!1\!+\!\frac{\rho P_a |h_{ae}|^2\!+\!\rho P_a |h_{aw}|^2}{(\!1\!\!-\!\!\rho\!) P_a |h_{ae}|^2\!+\!(\!1\!\!-\!\!\rho\!) P_a |h_{aw}|^2\!+\!\sigma_{e}^2\!+\!\sigma_{w}^2}\!\bigg),\nonumber
\end{align}
According to the definition in (\ref{pso}), the SOP is given by (\ref{eqn:sop-fri-an}).
\begin{figure*}[t]
\begin{align}\label{eqn:sop-fri-an}
p_{so}^{\mathrm{FA}}(\rho,R_s) & \!=\! 1-\exp\! \left(\! \frac{(2^{R_s}-1) \sigma_{b}^2}{\rho P_a\!-\!(\!2^{R_s}\!-\!1\!)(\!1\!-\!\rho\!)P_a}\!\right)\mathbb{P} \left(\! \frac{\rho P_a |h_{ab}|^2}{(\!1\!-\!\rho\!)\! P_a |h_{ab}|^2\!+ \!\sigma_{b}^2}\!-\frac{2^{R_s}(\rho P_a |h_{ae}|^2\!+\!\rho P_a |h_{aw}|^2)}{(\!1\!-\!\rho\!)\! P_a |h_{ae}|^2\!+\!(\!1\!-\!\rho\!)\! P_a |h_{aw}|^2\!+\!\sigma_{e}^2\!+\!\sigma_{w}^2}\!>\! 2^{R_s}\!-\!1 \!\right) \nonumber  \\
& \!=\!1-\exp \left( \frac{(2^{R_s}-1) \sigma_{b}^2}{\rho P_a-(2^{R_s}-1)(1-\rho)P_a}-\frac{(2^{R_s}+\rho-1)\sigma_{b}^2}{(1-2^{R_s})(1-\rho)P_a}\right)\!\times\! \int_{0}^{\frac{(1-2^{R_s}(1-\rho)) (\sigma_{w}^2+\sigma_{e}^2)}{(2^{R_s}-1)(1-\rho)P_a}}\!\! \int_{0}^{\frac{(1-2^{R_s}(1-\rho)) (\sigma_{w}^2+\sigma_{e}^2)}{(2^{R_s}-1)(1-\rho)P_a}-z} \nonumber \\
& \ \ \ \!\!  \times \exp \left(-y-\frac{(2^{R_s}-1)\sigma_{b}^2(\sigma_{w}^2+\sigma_{e}^2)-\frac{(2^{R_s}+\rho-1)(1-(1-\rho)2^{R_s})\sigma_{b}^2(\sigma_{w}^2+\sigma_{e}^2)}{(1-2^{R_s})(1-\rho)}}{(1-2^{R_s})(1-\rho)P_a^2 (y+z)+ (1-(1-\rho)2^{R_s}) P_a(\sigma_{w}^2+\sigma_{e}^2)}-z\right) \text{\,d}y \text{\,d}z,
\end{align}
{\noindent} 
\rule[-10pt]{18.07cm}{0.05em}
\end{figure*}

When Alice does not transmit messages, we consider only the covertness of the transmission by analyzing the probability of false alarm. In this case, Alice still sends AN to confuse Willie. Thus, based on (\ref{eqn:yW-half-ind-an}), the signal vector $\mathbf y_{w}$ contains both the signals (i.e., AN and background noise) shared by Eve, AN and background noise.
In this case, the average power of the received symbols at Willie is $\bar{P}_w =(1-\rho)P_a|h_{aw}|^2+(1-\rho)P_a|h_{ae}|^2+ \sigma_e^2+ \sigma_w^2$.
Thus, the probability of false alarm $p_{FA}$ is given by
\begin{align}\label{pfa_fri_an}
p_{FA} & \!\!=\!\! \mathbb{P} \left((\!1\!\!-\!\!\rho\!)P_a|h_{aw}|^2\!+\! (\!1\!\!-\!\!\rho\!)P_a|h_{ae}|^2\!+\! \sigma_e^2\!+\! \sigma_w^2 \!\ge\!\theta \right) \\
& \!\!=\!\!
\begin{cases}
\left(\!1\!+\!\frac{\theta-\sigma_{e}^2- \sigma_{w}^2}{(1-\rho)P_a}\!\right)\!\exp\!\left(\!-\frac{\theta-\sigma_{e}^2- \sigma_{w}^2}{(1-\rho)P_a}\!\right), & \theta > \sigma_{e}^2+ \sigma_{w}^2, \\
1, & \theta \le \sigma_{e}^2+ \sigma_{w}^2.
\end{cases}\nonumber
\end{align}

Combining the $p_{FA}$ in (\ref{pfa_fri_an}) and the $p_{MD}$ in (\ref{pmd_fri_pc}), the COP can be given by
\begin{equation} \label{pco_fri_an}
p_{co}^{\mathrm{FA}}(\rho,\!\theta) \!\!=\!\!
\begin{cases}
\!\left(1\!+\!\frac{\theta-\sigma_{e}^2- \sigma_{w}^2}{P_a}\right)\!\exp\!\left(\!- \frac{\theta-\sigma_{e}^2- \sigma_{w}^2}{P_a}\right)&\\
-\!\left(\!1\!\!+\!\!\frac{\theta-\sigma_{e}^2\!- \!\sigma_{w}^2}{(1\!-\!\rho)P_a}\right)\!\exp\!\left(\!-\frac{\theta-\sigma_{e}^2\!- \!\sigma_{w}^2}{(1\!-\!\rho)P_a}\right),&\!\! \theta\! >\! \sigma_{e}^2\!+\! \sigma_{w}^2, \\
0, &\!\! \theta \!\le\! \sigma_{e}^2\!+\! \sigma_{w}^2.
\end{cases}
\end{equation}

We can see from (\ref{pco_fri_an}) that the optimal detection threshold $\theta^*_{\mathrm{FA}}$ can be obtained by solving $\frac{\partial p_{co}^{\mathrm{FA}}}{\partial \theta}=0$, which is
\begin{equation}\label{theta_fri_an}
\theta^*_{\mathrm{FA}}=\sigma_{e}^2+ \sigma_{w}^2+\frac{2(\rho -1)P_a}{\rho} \ln(1-\rho).
\end{equation}

Given the $\theta^*_{\mathrm{FA}}$ in (\ref{theta_fri_an}), we solve the optimization problem in (\ref{AN-based}) to obtain the CSR, which is given in the following theorem.
\begin{theorem}
Under the scenario where Willie and Eve are in the friend relationship and Alice adopts the AN-based secure transmission scheme, the CSR of the system is
\begin{equation}
R_{cs}^{\mathrm{FA}} \!\!=\!\!R^*_{s,\mathrm{FA}}\!(\!\rho^*_{\mathrm{FA}}\!)\exp\!\! \left(\!- \frac{(2^{R^*_{s,\mathrm{FA}}(\rho^*_{\mathrm{FA}})}-1) \sigma_{b}^2}{\rho^*_{\mathrm{FA}}\! P_a\!-\! (\!2^{R^*_{s,\mathrm{FA}}\!(\!\rho^*_{\mathrm{FA}}\!)}\!\!-\!\!1\!)(\!1\!\!-\!\!\rho^*_{\mathrm{FA}}\!)\!P_a}\!\right)\!.
\end{equation}
Here, the optimal power allocation parameter $\rho^*_{\mathrm{FA}}$ solves $p_{co}^{\mathrm{FA}} (\rho, \theta^*_{\mathrm{FA}})= \epsilon_c$ with $\theta^*_{\mathrm{FA}}$ given by (\ref{theta_fri_an}).
The optimal secrecy rate $R^*_{s,\mathrm{FA}}$ is given in (\ref{rs_opt_IA}), where $R_{s,\mathrm{FA}}^0$ can be obtained by solving $\frac{\partial R_{cs}}{\partial R_s}=0$, $R^{\text{SOP}}_{s,\mathrm{FA}}$ is the solution of $p_{so}^{\mathrm{FA}}(R_s)=\epsilon_s$ and $R^{\text{TP}}_{s,\mathrm{FA}}$ is given in (\ref{rs_top_IA}).
\end{theorem}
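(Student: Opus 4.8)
The plan is to mirror the two-stage optimization used for Theorem~\ref{theorem:csr-ind-an} (the independence, AN-based case): for fixed $R_s$ first optimize over the power-split parameter $\rho$, then optimize over $R_s$. Since the transmission probability $p_{tx}^{\mathrm{FA}}$ coincides with $p_{tx}^{\mathrm{IA}}$ in \eqref{eqn:pto-ind-an}, the objective $R_s p_{tx}^{\mathrm{FA}}(\rho,R_s)$ has exactly the same $\rho$-dependence as before, so I would first record that it is increasing in $\rho$. Consequently the optimal $\rho$ is pinned at the upper end of the feasible interval, and the task reduces to identifying which constraint produces that upper bound.

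The central step is to substitute the optimal detection threshold $\theta^*_{\mathrm{FA}}$ from \eqref{theta_fri_an} into the COP expression \eqref{pco_fri_an}. The key algebraic simplification is that $\theta^*_{\mathrm{FA}}-\sigma_e^2-\sigma_w^2 = -\frac{2(1-\rho)P_a}{\rho}\ln(1-\rho)$, so the two exponent arguments $\frac{\theta-\sigma_e^2-\sigma_w^2}{P_a}$ and $\frac{\theta-\sigma_e^2-\sigma_w^2}{(1-\rho)P_a}$ collapse to $-\frac{2(1-\rho)}{\rho}\ln(1-\rho)$ and $-\frac{2}{\rho}\ln(1-\rho)$, respectively. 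After this substitution every occurrence of $P_a$ and of the noise variances cancels, leaving $p_{co}^{\mathrm{FA}}$ as a function of $\rho$ alone of the form $g(u_1)-g(u_2)$ with $g(u)=(1+u)e^{-u}$, $u_1=-\frac{2(1-\rho)}{\rho}\ln(1-\rho)$ and $u_2=u_1/(1-\rho)$. This parallels the independence case, where the reduced COP was $\rho(1-\rho)^{(1-\rho)/\rho}$.

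The hard part will be verifying that this reduced $p_{co}^{\mathrm{FA}}(\rho)$ is strictly increasing on $(0,1)$. In the independence case this was a one-line derivative; here the COP is a difference of two $(1+u)e^{-u}$ terms, so the derivative is messier. I would exploit that $g'(u)=-ue^{-u}<0$, that $u_2>u_1>0$ (hence $p_{co}^{\mathrm{FA}}>0$, as required), and that both arguments increase with $\rho$, and then carry out a direct sign analysis after factoring out the common $(1-\rho)^{2/\rho}$ (noting that $e^{-u_1}=(1-\rho)^{2/\rho}/(1-\rho)^2$ and $e^{-u_2}=(1-\rho)^{2/\rho}$), so that only a tractable bracketed term remains to be signed. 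Once monotonicity holds, the COP constraint \eqref{con:cop-an} is the sole source of an upper bound, while the TP and SOP constraints — with $p_{tx}^{\mathrm{FA}}$ increasing and $p_{so}^{\mathrm{FA}}$ decreasing in $\rho$ — supply only lower bounds; hence $\rho^*_{\mathrm{FA}}$ is the unique root of $p_{co}^{\mathrm{FA}}(\rho,\theta^*_{\mathrm{FA}})=\epsilon_c$, as claimed.

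For the second stage I would reuse the $R_s$-analysis of Theorem~\ref{theorem:csr-ind-an} essentially verbatim: because $p_{tx}^{\mathrm{FA}}$ decreases and $p_{so}^{\mathrm{FA}}$ increases in $R_s$, the feasible region is $[0,\min\{R^{\text{SOP}}_{s,\mathrm{FA}},R^{\text{TP}}_{s,\mathrm{FA}}\}]$, with the endpoints obtained from $p_{so}^{\mathrm{FA}}(R_s)=\epsilon_s$ and \eqref{rs_top_IA}. The objective $R_s p_{tx}^{\mathrm{FA}}$ is again unimodal in $R_s$, with a single interior stationary point $R_{s,\mathrm{FA}}^0$ solving $\frac{\partial R_{cs}}{\partial R_s}=0$; comparing $R_{s,\mathrm{FA}}^0$ with the two endpoints yields the three-case form \eqref{rs_opt_IA}, and substituting $\rho^*_{\mathrm{FA}}$ and $R^*_{s,\mathrm{FA}}$ into the objective completes the proof.
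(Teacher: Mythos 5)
Your proposal is correct and follows exactly the route the paper intends: the paper's own "proof" of this theorem is a one-line deferral to the Theorem~2 template (substitute the optimal threshold into the COP, show the reduced COP is increasing in $\rho$ so that only the covertness constraint binds from above while TP and SOP give lower bounds, then rerun the $R_s$ analysis from the PC case), which is precisely the two-stage argument you lay out, and your identification of the COP monotonicity as the only genuinely new step is accurate. One correction to your sketch of that step: with $u_1=-\frac{2(1-\rho)}{\rho}\ln(1-\rho)$ and $u_2=u_1/(1-\rho)$ it is \emph{not} true that both arguments increase with $\rho$ --- one has $u_1'=\frac{2\left(\rho+\ln(1-\rho)\right)}{\rho^2}<0$ while $u_2'=\frac{2}{\rho^2}\left(\frac{\rho}{1-\rho}+\ln(1-\rho)\right)>0$ --- and this is exactly what makes the sign analysis immediate, since with $g(u)=(1+u)e^{-u}$ and $g'(u)=-ue^{-u}$ the derivative $\frac{d}{d\rho}\left[g(u_1)-g(u_2)\right]=-u_1e^{-u_1}u_1'+u_2e^{-u_2}u_2'$ is then a sum of two positive terms, so no further factoring is needed.
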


\begin{proof}
The proof is similar to that of Theorem \ref{theorem:csr-ind-an} and thus omitted here.
\end{proof}

\section{Numerical Results} \label{5}
\begin{figure}[!t]
\centering
\subfigure[Independence relationship scenario.]{
	\includegraphics[width=0.4\textwidth]{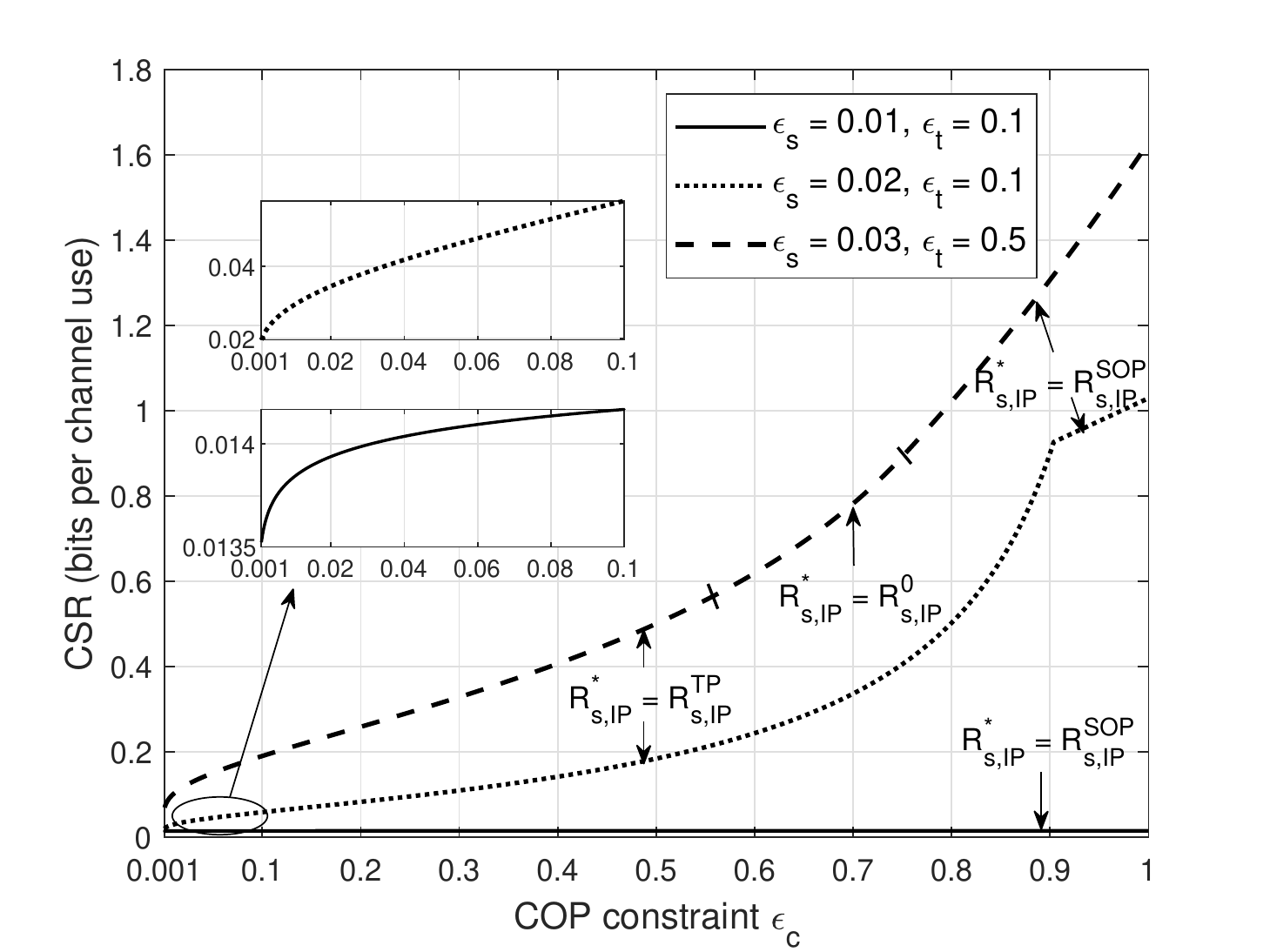}
	\label{IP_Rcs_Ec}}
\subfigure[Friend relationship scenario.]{
	\includegraphics[width=0.4\textwidth]{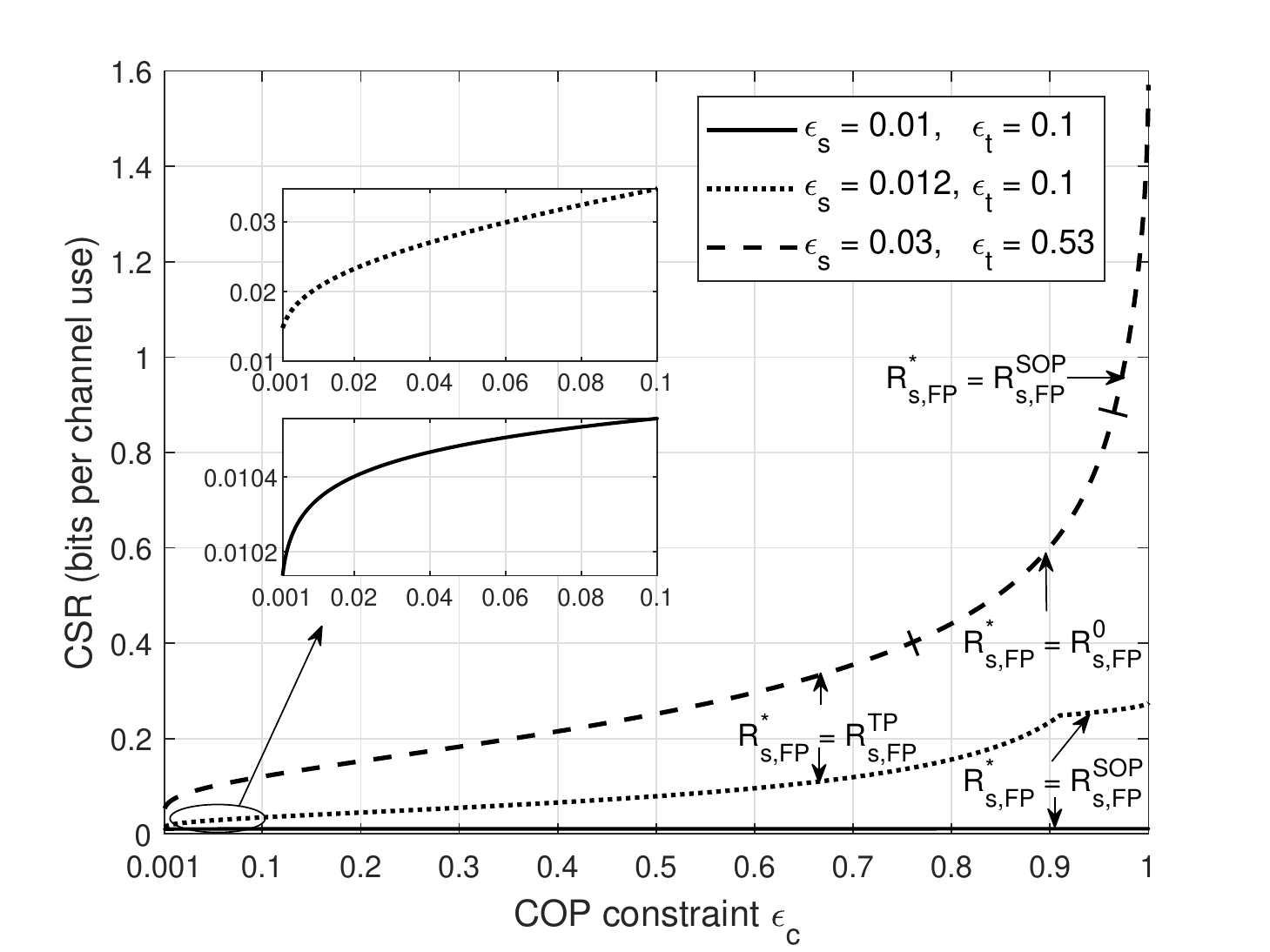}
	\label{FP_Rcs_Ec}}
\caption{CSR  $R_{cs}$ vs. COP constraint $\epsilon_c$ (PC-based transmission scheme).}
\label{Rcs_Ec_PC}
\end{figure}

\begin{figure}[!t]
\centering
\subfigure[Independence relationship scenario.]{
	\includegraphics[width=0.4\textwidth]{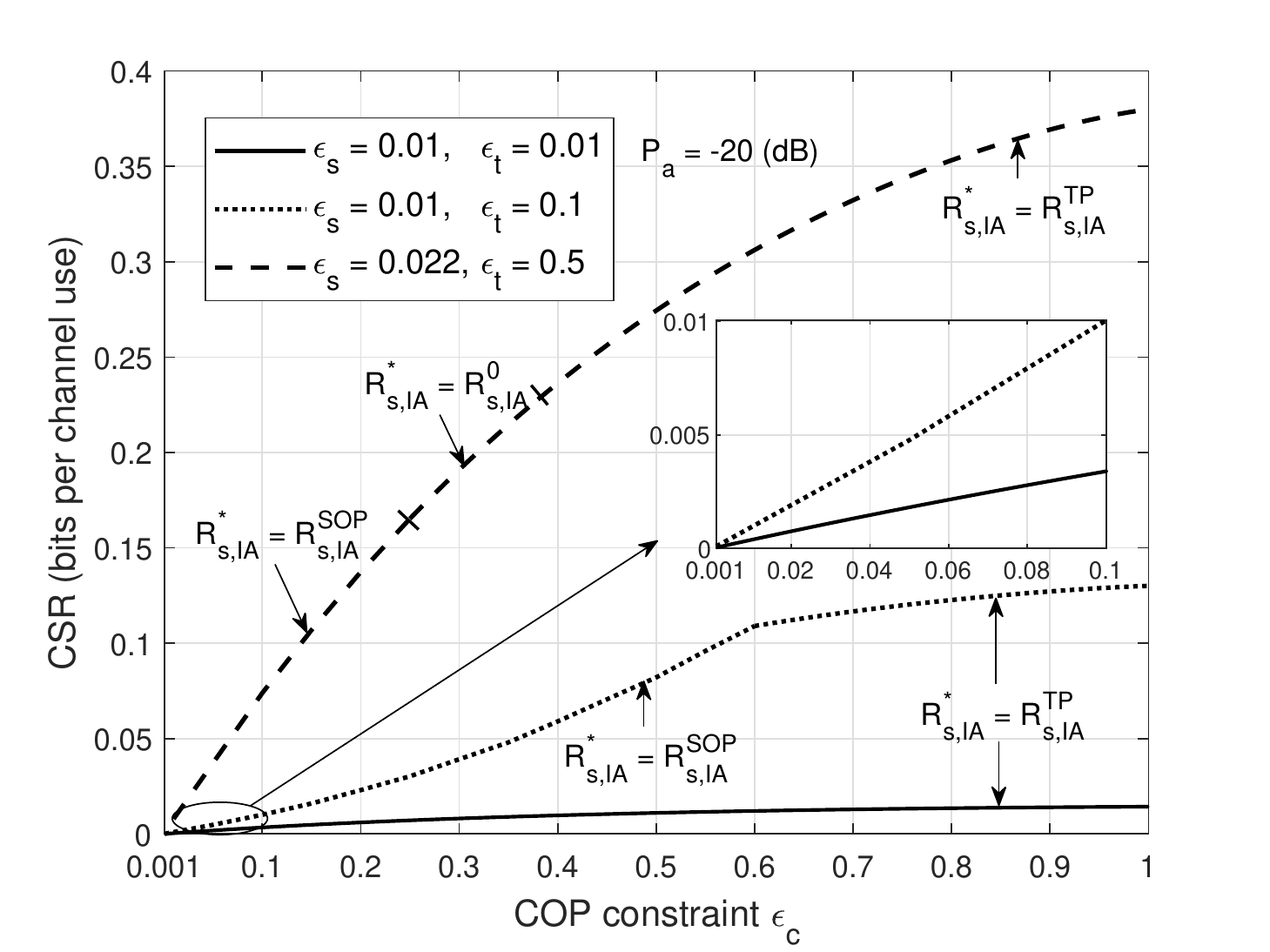}
	\label{IA_Rcs_Ec}}
\subfigure[Friend relationship scenario.]{
	\includegraphics[width=0.4\textwidth]{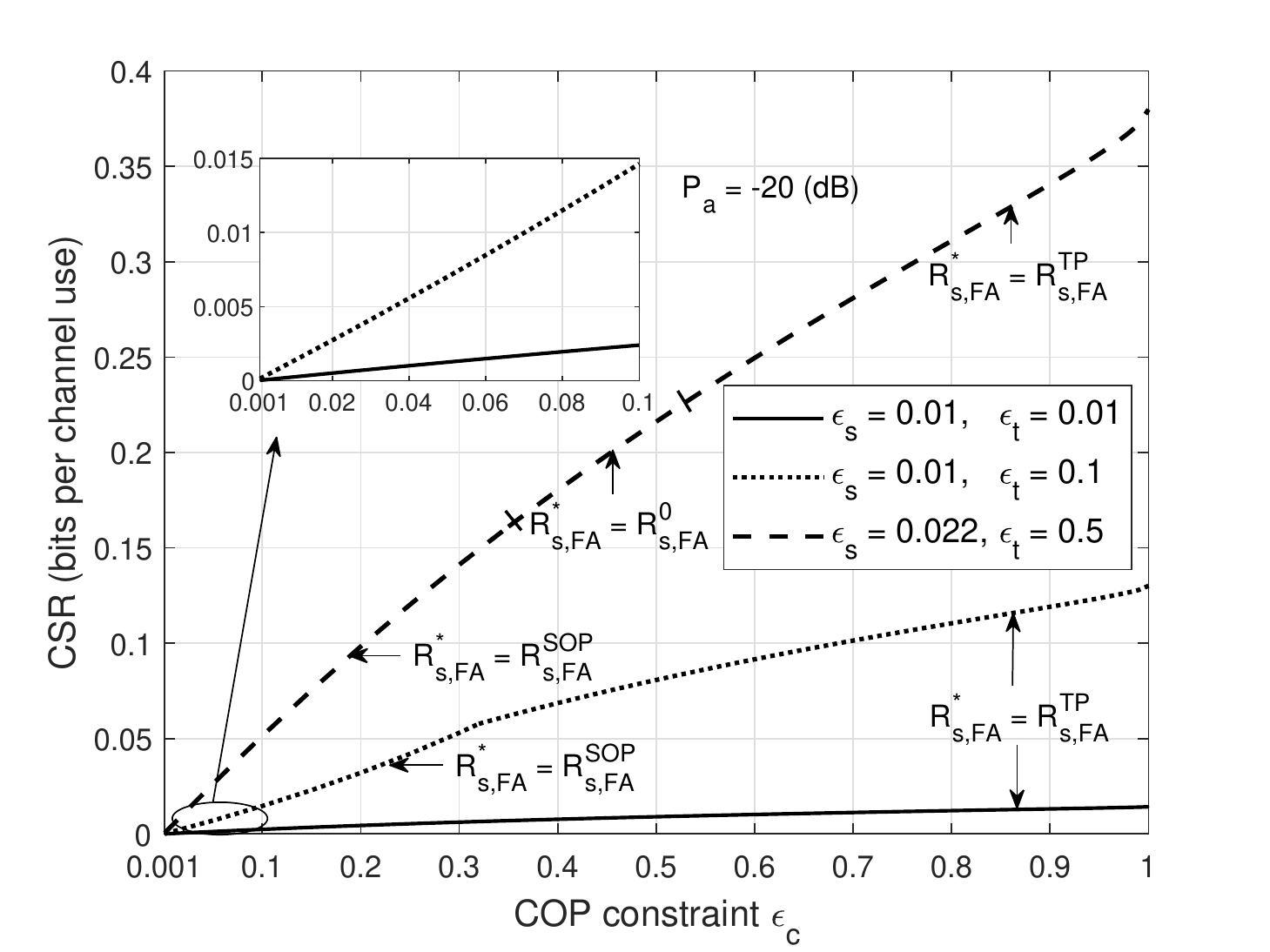}
	\label{FA_Rcs_Ec}}
\caption{CSR  $R_{cs}$ vs. COP constraint $\epsilon_c$ (AN-based transmission scheme).}
\label{Rcs_Ec_AN}
\end{figure}

In this section, we provide extensive numerical results to illustrate the CSR performances of the four representative scenarios under the new secure communication paradigm.
We also show the impacts of various system parameters (e.g., COP constraint $\epsilon_c$, SOP constraint $\epsilon_s$, TP constraint $\epsilon_t$ and transmit power $P_a$) on the CSR performance.
Unless otherwise stated, we set the parameter $\upsilon$ to $\upsilon=0.01$ and the noise powers at Bob, Willie and Eve to $\sigma_b^2=-20$ dB and $\sigma_w^2=\sigma_e^2=0$ dB.

To explore the impact of the COP constraint $\epsilon_c$ on the CSR performance, we show in Fig. \ref{Rcs_Ec_PC}  $R_{cs}$ vs. $\epsilon_c$ in the independence relationship case under the PC-based and AN-based transmission schemes, respectively. The results for the friend relationship case under both transmission schemes are presented in Fig. \ref{Rcs_Ec_AN}. We set the transmit power of Alice to $P_a=-20$ dB in Fig. \ref{Rcs_Ec_AN}.
In each subfigure of Fig. \ref{Rcs_Ec_PC} and Fig. \ref{Rcs_Ec_AN}, we also plot the CSR curves under different settings of SOP constraint $\epsilon_s$ and TP constraint $\epsilon_t$.
We can see from Fig. \ref{Rcs_Ec_PC} and Fig. \ref{Rcs_Ec_AN} that the CSRs achieved under different SOP and TP constraints always increase as $\epsilon_c$ increases.
This is because a looser COP constraint results in a larger optimal transmit power in the PC-based scheme (resp. a larger optimal power allocation parameter in the AN-based scheme) and thus a larger CSR.

We can also observe from Fig. \ref{Rcs_Ec_PC} and Fig. \ref{Rcs_Ec_AN} that the shape of the CSR curve varies as the values of the SOP constraint $\epsilon_s$ and TP constraint $\epsilon_t$ change. For example, the CSR curve under the setting of $\epsilon_s=0.03$ and $\epsilon_t=0.5$ (dashed line) in Fig. \ref{Rcs_Ec_PC} exhibits an exponential growth and that under the setting of $\epsilon_s=0.02$ and $\epsilon_t=0.1$ (dotted line) grows in a piecewise fashion.
This is because different values of $\epsilon_s$, $\epsilon_t$ and the COP constraint $\epsilon_c$ result in different $R_{s,\mathrm{IP}}^\text{SOP}$, $R_{s,\mathrm{IP}}^\text{TP}$ and $R_{s,\mathrm{IP}}^0$ in (\ref{IP_Rs_sop}-\ref{IP_Rs_Rcs}) (resp. $R_{s,\mathrm{FP}}^\text{SOP}$, $R_{s,\mathrm{FP}}^\text{TP}$, $R_{s,\mathrm{FP}}^0$ in (\ref{FP_Rs_sop},\ref{IP_Rs_top},\ref{IP_Rs_Rcs}), $R_{s,\mathrm{IA}}^\text{SOP}$, $R_{s,\mathrm{IA}}^\text{TP}$, $R_{s,\mathrm{IA}}^0$ in (\ref{rs_opt_IA}) and $R_{s,\mathrm{FA}}^\text{SOP}$, $R_{s,\mathrm{FA}}^\text{TP}$, $R_{s,\mathrm{FA}}^0$ in (\ref{rs_opt_IA})), which further lead to different optimal target secrecy rates (as labeled in Fig. \ref{Rcs_Ec_PC} and Fig. \ref{Rcs_Ec_AN}) and thus different CSR curves.

\begin{figure}[!t]
\centering
\subfigure[Independence relationship scenario.]{
	\includegraphics[width=0.4\textwidth]{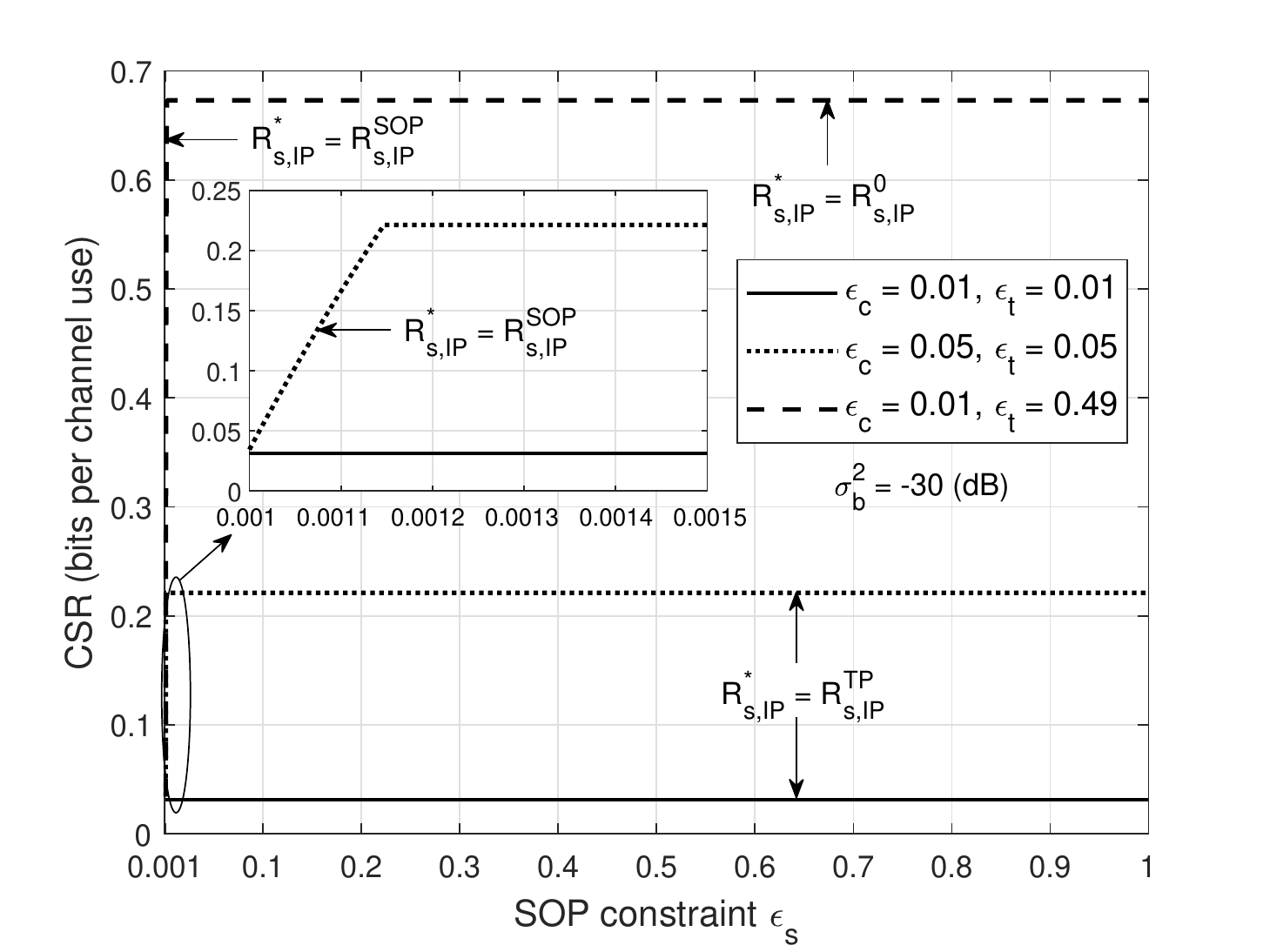}
	\label{IP_Rcs_Es}}
\subfigure[Friend relationship scenario.]{
	\includegraphics[width=0.4\textwidth]{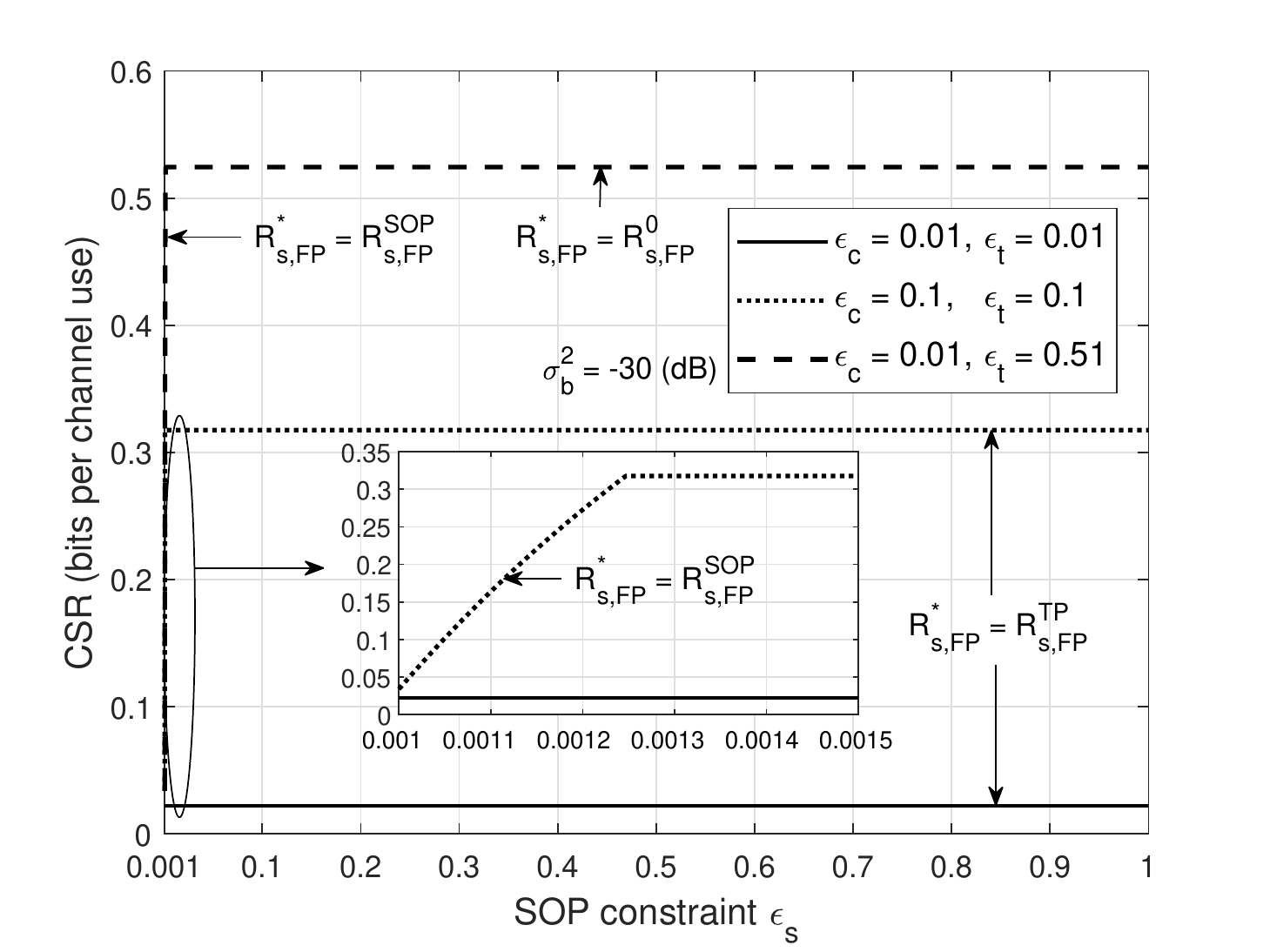}
	\label{FP_Rcs_Es}}
\caption{CSR $R_{cs}$ vs. SOP constraint $\epsilon_s$ (PC-based transmission scheme).}
\label{epsilon_s_PC}
\end{figure}

\begin{figure}[!t]
\centering
\subfigure[Independence relationship scenario.]{
	\includegraphics[width=0.4\textwidth]{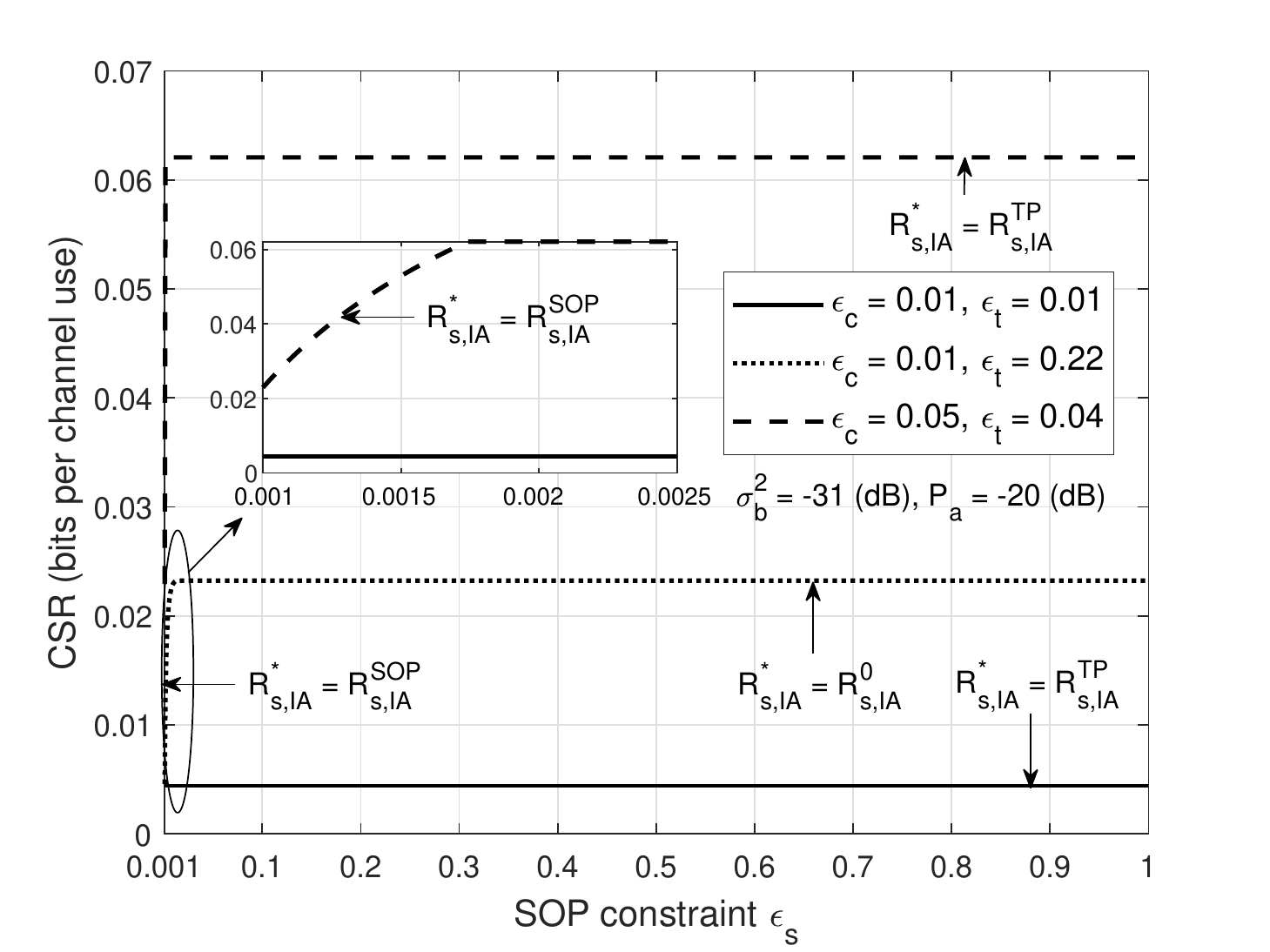}
	\label{IA_Rcs_Es}}
\subfigure[Friend relationship scenario.]{
	\includegraphics[width=0.4\textwidth]{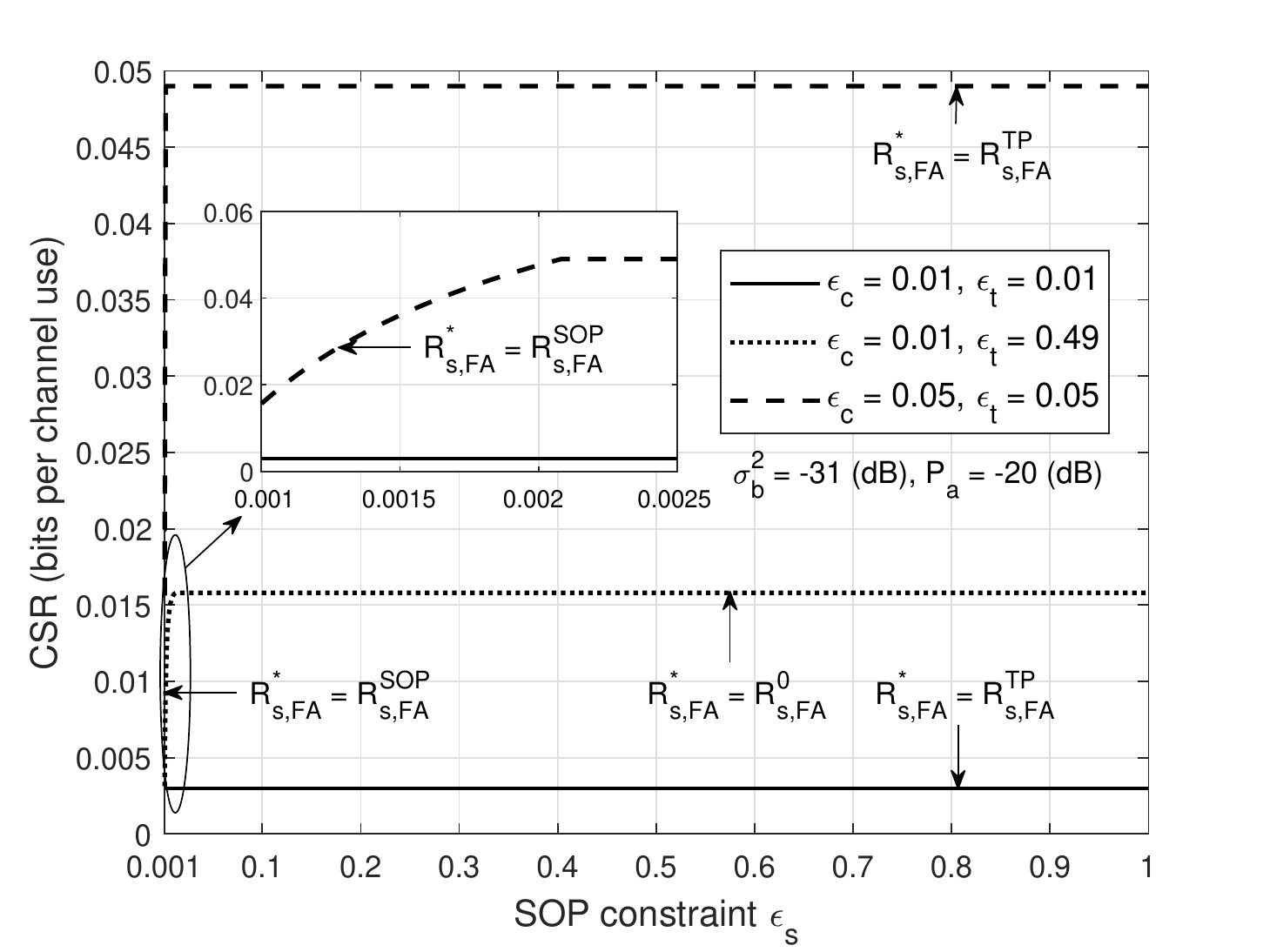}
	\label{FA_Rcs_Es}}
\caption{CSR $R_{cs}$ vs. SOP constraint $\epsilon_s$ (AN-based transmission scheme).}
\label{epsilon_s_AN}
\end{figure}

Next, we investigate the impact of the SOP constraint $\epsilon_s$ on the CSR performance, for which we show $R_{cs}$ vs. $\epsilon_s$ in the independence and friend relationship cases under the PC-based transmission scheme in Fig. \ref{epsilon_s_PC} and those under the AN-based transmission scheme in Fig. \ref{epsilon_s_AN}. We set the noise power at Bob to $\sigma_b^2=-30$ dB in Fig. \ref{epsilon_s_PC} and that to $\sigma_b^2=-31$ dB in Fig. \ref{epsilon_s_AN}. We set the transmit power of Alice to $P_a=-20$ dB in Fig. \ref{epsilon_s_AN}. For both figures, we consider three different settings of COP constraint $\epsilon_c$ and TP constraint $\epsilon_t$, respectively.
We can see from Fig. \ref{epsilon_s_PC} and Fig. \ref{epsilon_s_AN} that, when both $\epsilon_c$ and $\epsilon_t$ are relatively small (e.g., $\epsilon_c=0.01$ and $\epsilon_t=0.01$ in Fig. \ref{IP_Rcs_Es}), the CSR stays unchanged as the SOP constraint $\epsilon_s$ increases, which implies that the SOP constraint $\epsilon_s$ has no impacts on the CSR performance. This is because, in this situation, the CSR is achieved at only the optimal target secrecy rate $R_{s,\mathrm{IP}}^*=R_{s,\mathrm{IP}}^\text{TP}$ (as labeled in Fig. \ref{IP_Rcs_Es}), which is independent of $\epsilon_s$ as can be seen from \eqref{IP_Rs_top}.
On the other hand, when either $\epsilon_c$ or $\epsilon_t$ is large, the CSR first increases sharply and then remains constant as the SOP constraint $\epsilon_s$ increases. This is because the optimal target secrecy rate is $R_{s,\mathrm{IP}}^*=R_{s,\mathrm{IP}}^\text{SoP}$ for small $\epsilon_s$, which increases as $\epsilon_s$ increases, and then changes to $R_{s,\mathrm{IP}}^*=R_{s,\mathrm{IP}}^\text{0}$ or $R_{s,\mathrm{IP}}^*=R_{s,\mathrm{IP}}^\text{TP}$ for large $\epsilon_s$, which is independent of $\epsilon_s$. Such phenomenon indicates that, when either $\epsilon_c$ or $\epsilon_t$ is large, the CSR is sensitive to the change of the SOP constraint $\epsilon_s$ in an extremely small region, e.g., from $0$ to about $0.00115$ in Fig.~\ref{IP_Rcs_Es}. Similar phenomena can be observed from Fig.~\ref{FP_Rcs_Es}, Fig.~\ref{IA_Rcs_Es} and Fig.~\ref{FA_Rcs_Es}.

\begin{figure}[!t]
\centering
\subfigure[Independence relationship scenario.]{
	\includegraphics[width=0.4\textwidth]{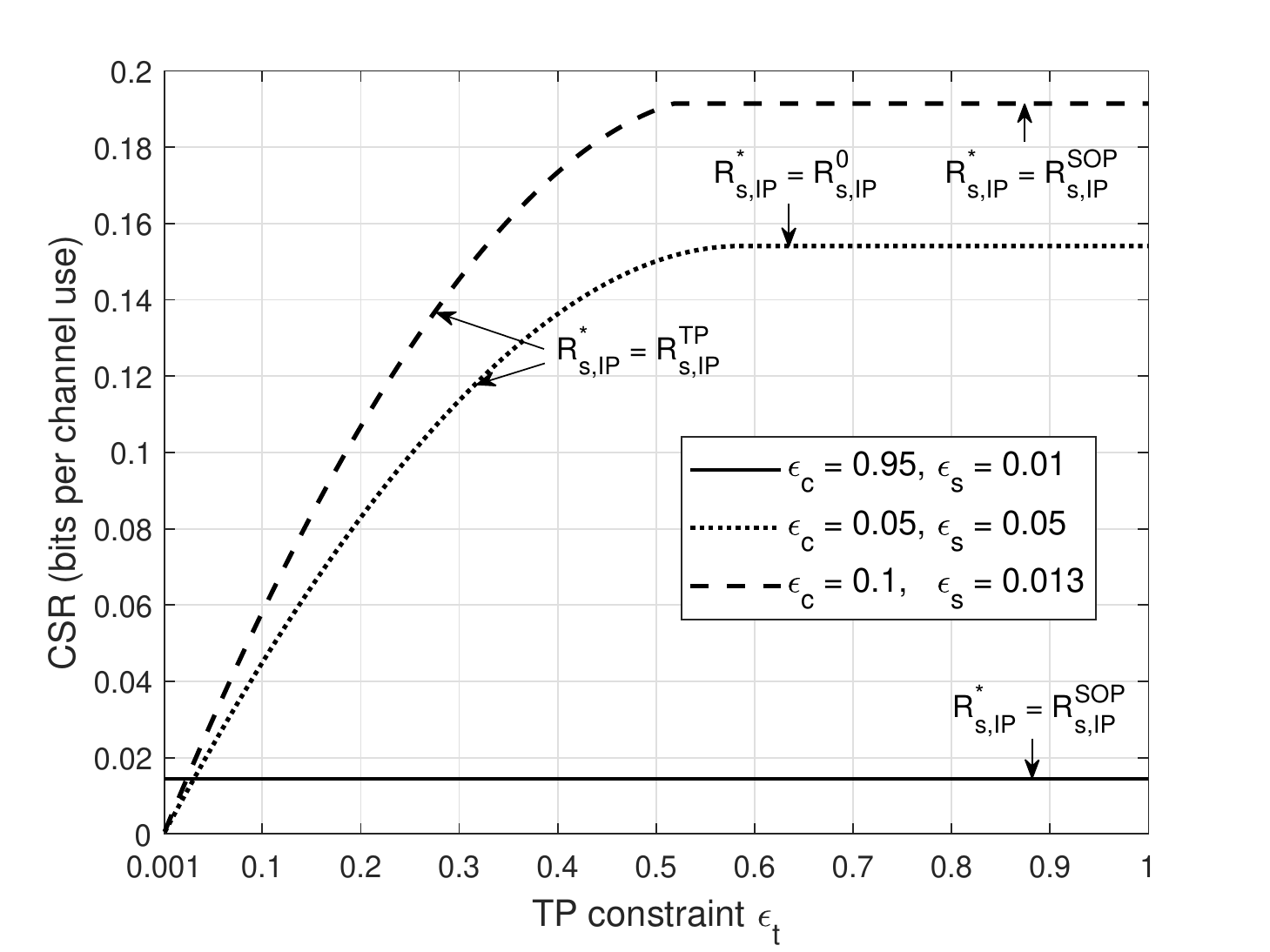}
	\label{IP_Rcs_Et}}
\subfigure[Friend relationship scenario.]{
	\includegraphics[width=0.4\textwidth]{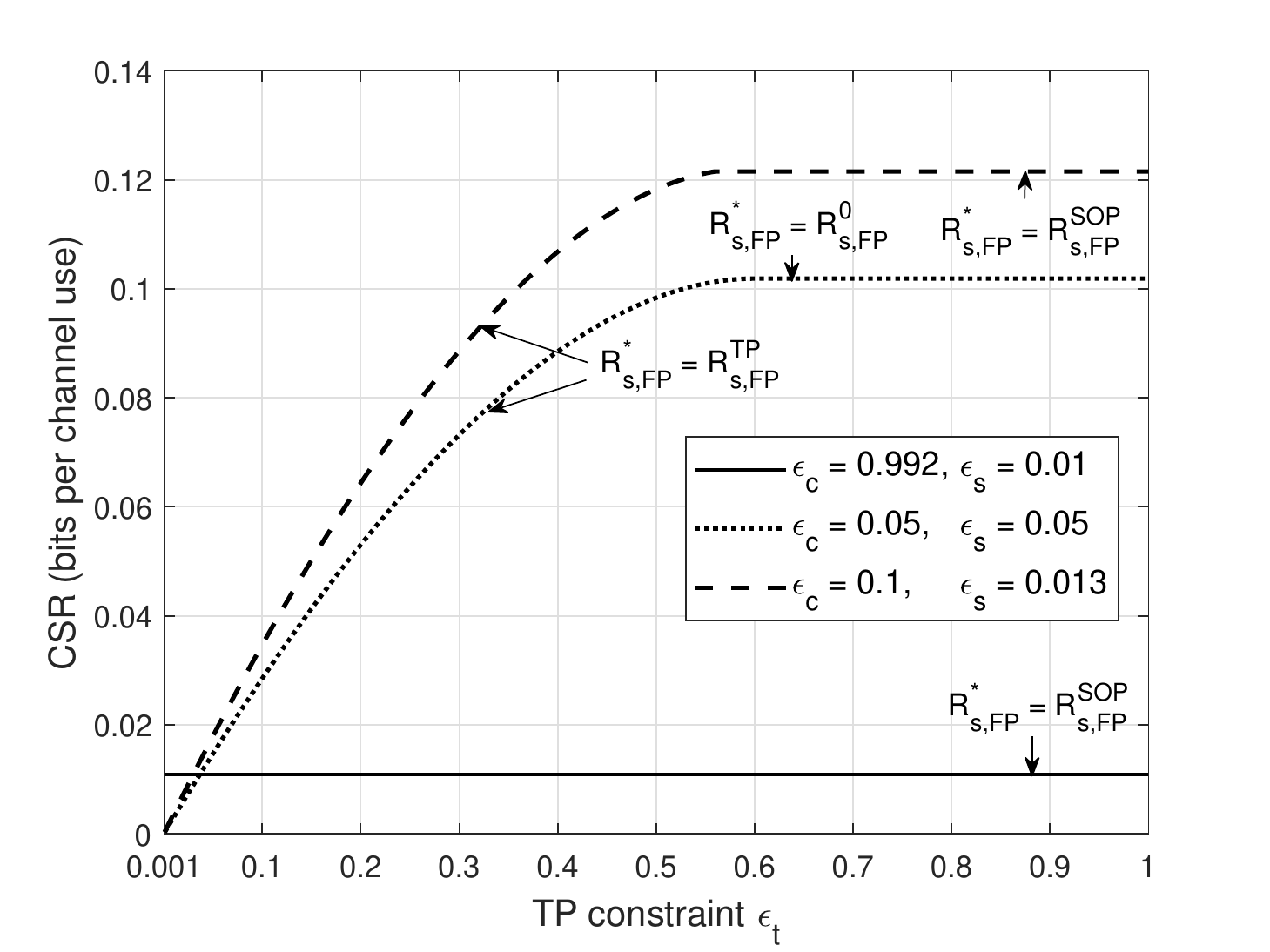}
	\label{FP_Rcs_Et}}
\caption{CSR $R_{cs}$ vs. TP constraint $\epsilon_t$ (PC-based transmission scheme).}
\label{epsilon_t_PC}
\end{figure}

\begin{figure}[!t]
\centering
\subfigure[Independence relationship scenario.]{
	\includegraphics[width=0.4\textwidth]{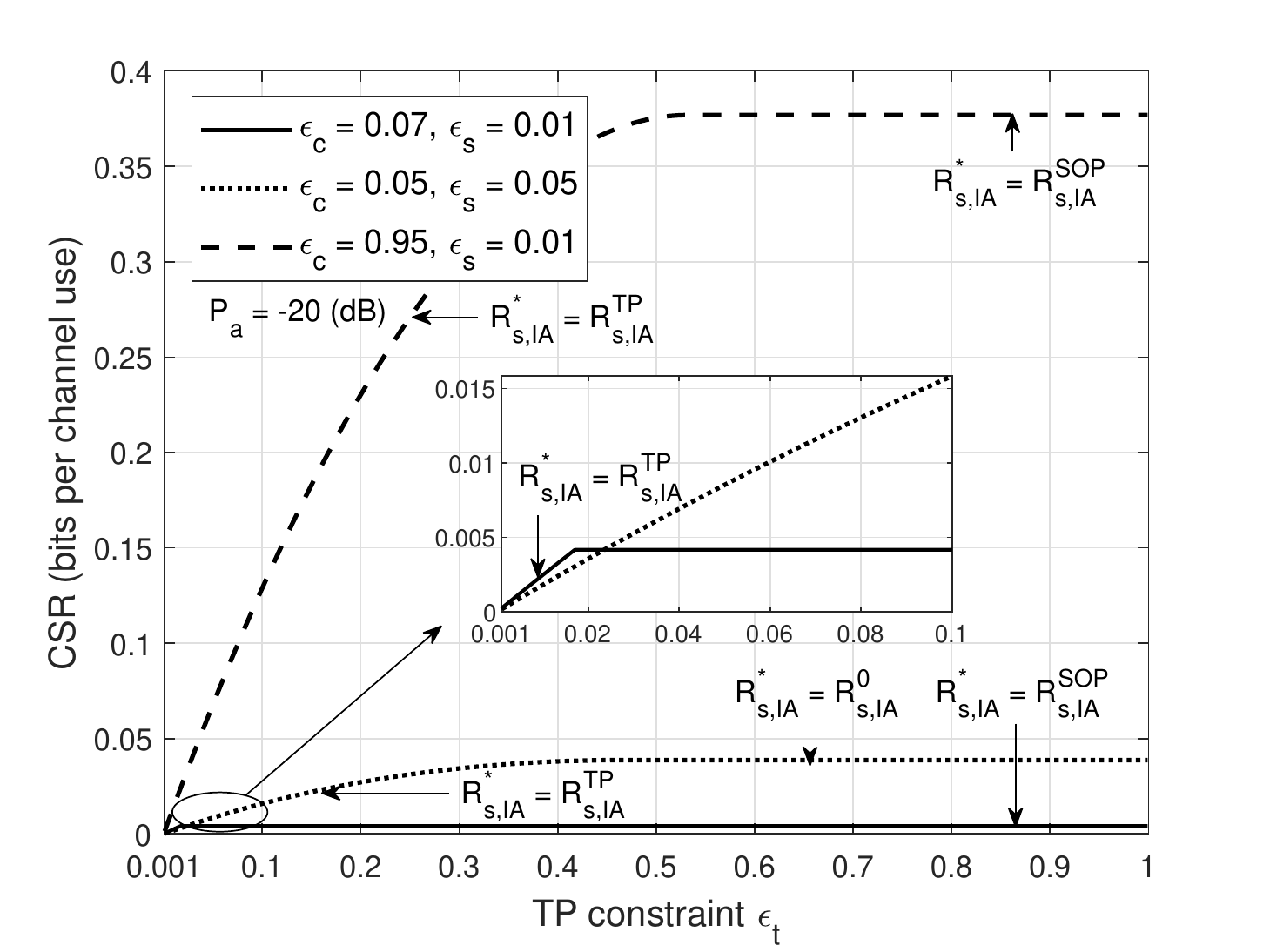}
	\label{IA_Rcs_Et}}
\subfigure[Friend relationship scenario.]{
	\includegraphics[width=0.4\textwidth]{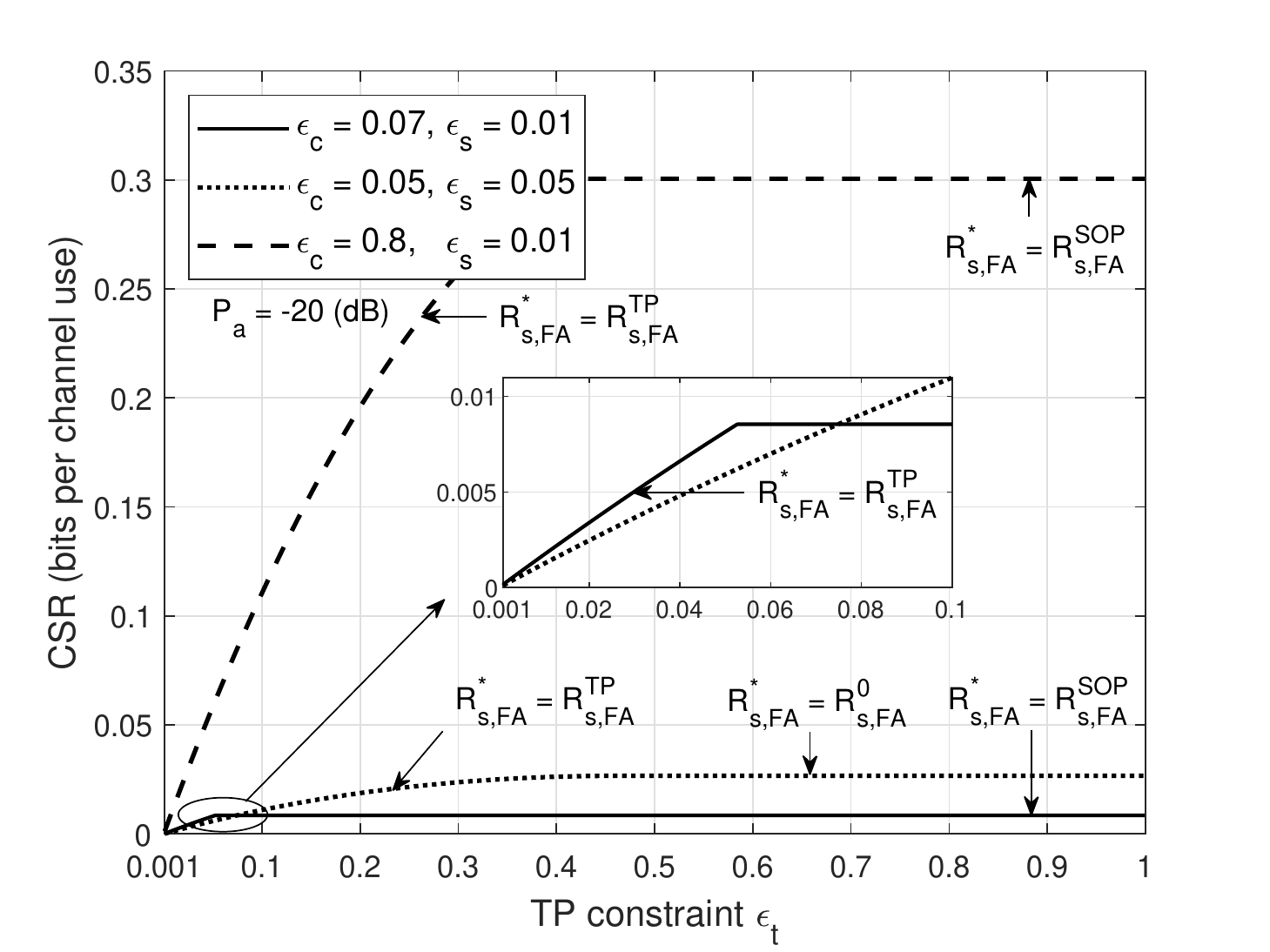}
	\label{FA_Rcs_Et}}
\caption{CSR $R_{cs}$ vs. TP constraint $\epsilon_t$ (AN-based transmission scheme).}
\label{epsilon_t_AN}
\end{figure}

We now show the impact of the TP constraint $\epsilon_t$ on the CSR performance in Fig. \ref{epsilon_t_PC} and Fig. \ref{epsilon_t_AN}, where we plot $R_{cs}$ vs. $\epsilon_t$ for the two relationship cases under the PC-based and AN-based transmission schemes, respectively. Three different settings of COP constraint $\epsilon_c$ and SOP constraint $\epsilon_s$ are adopted for each subfigure in Fig. \ref{epsilon_t_PC} and Fig. \ref{epsilon_t_AN}. We set the transmit power of Alice to $P_a=-20$ dB in Fig. \ref{epsilon_t_AN}. We can see from Fig. \ref{IP_Rcs_Et} that, if the COP constraint $\epsilon_c$ is much larger than the SOP constraint $\epsilon_s$, the CSR stays constant as the constraint $\epsilon_t$ increases, i.e., the CSR is independent of $\epsilon_t$. Otherwise, the CSR first increases and then stays constant as $\epsilon_t$ increases. This is because, for the former case, the CSR is achieved at only the optimal target secrecy rate $R_{s,\mathrm{IP}}^*=R_{s,\mathrm{IP}}^\text{SOP}$ (as labeled in Fig. \ref{IP_Rcs_Et}), which is independent of $\epsilon_t$ as can be seen from \eqref{IP_Rs_sop}. For the latter case, the optimal target secrecy rate is $R_{s,\mathrm{IP}}^*=R_{s,\mathrm{IP}}^\text{TP}$ for small $\epsilon_t$, which increases as $\epsilon_t$ increases, and then changes to $R_{s,\mathrm{IP}}^*=R_{s,\mathrm{IP}}^\text{0}$ or $R_{s,\mathrm{IP}}^*=R_{s,\mathrm{IP}}^\text{SOP}$ for large $\epsilon_t$, which is independent of $\epsilon_t$. We can observe similar phenomena from Fig.~\ref{FP_Rcs_Et}, Fig.~\ref{IA_Rcs_Et} and Fig.~\ref{FA_Rcs_Et}.

\begin{figure}[!t]
\centering
\subfigure[PC-based transmission scheme.]{
	\includegraphics[width=0.4\textwidth]{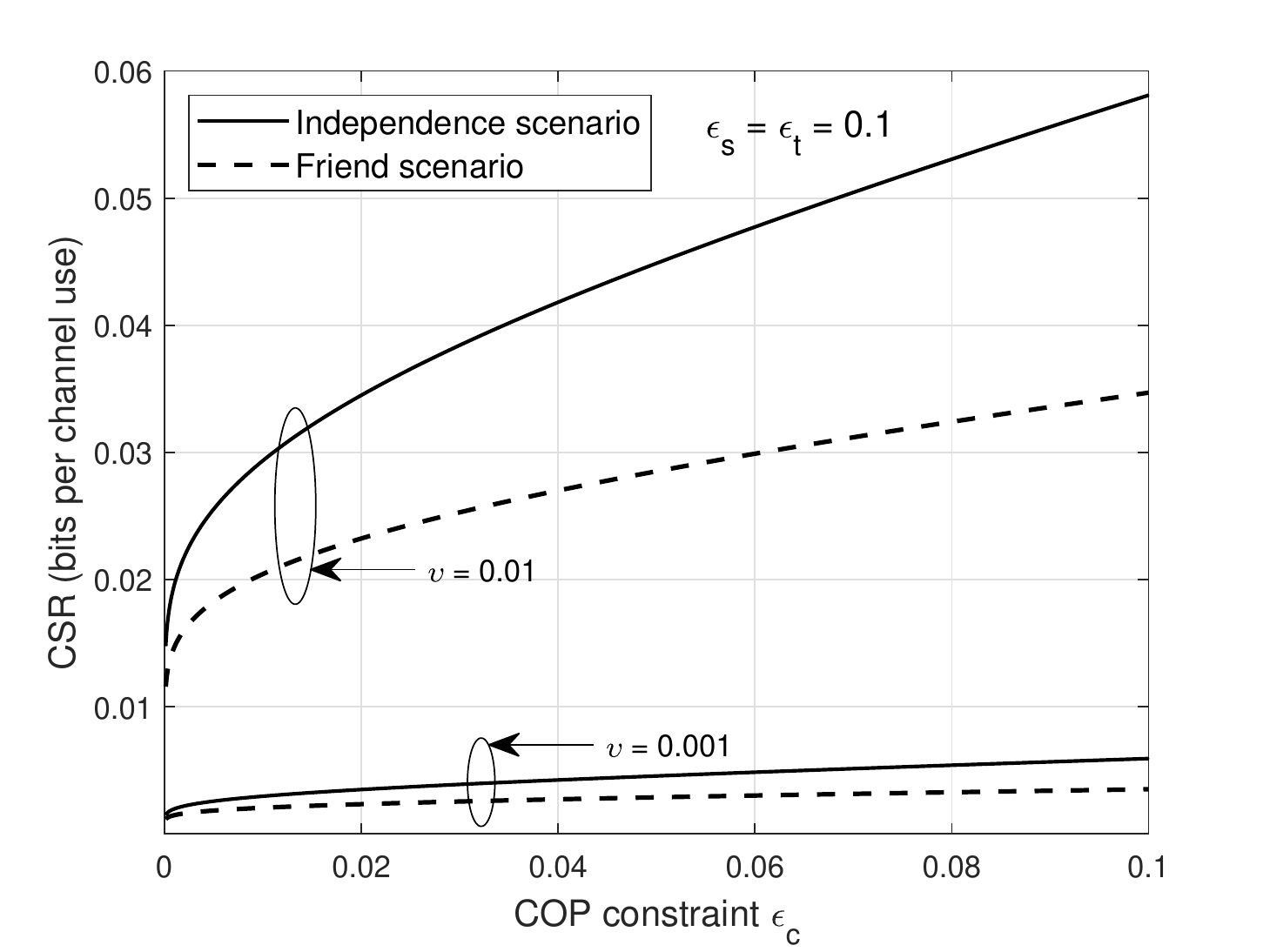}
	\label{PC_Rcs_Ec}}
\subfigure[AN-based transmission scheme.]{
	\includegraphics[width=0.4\textwidth]{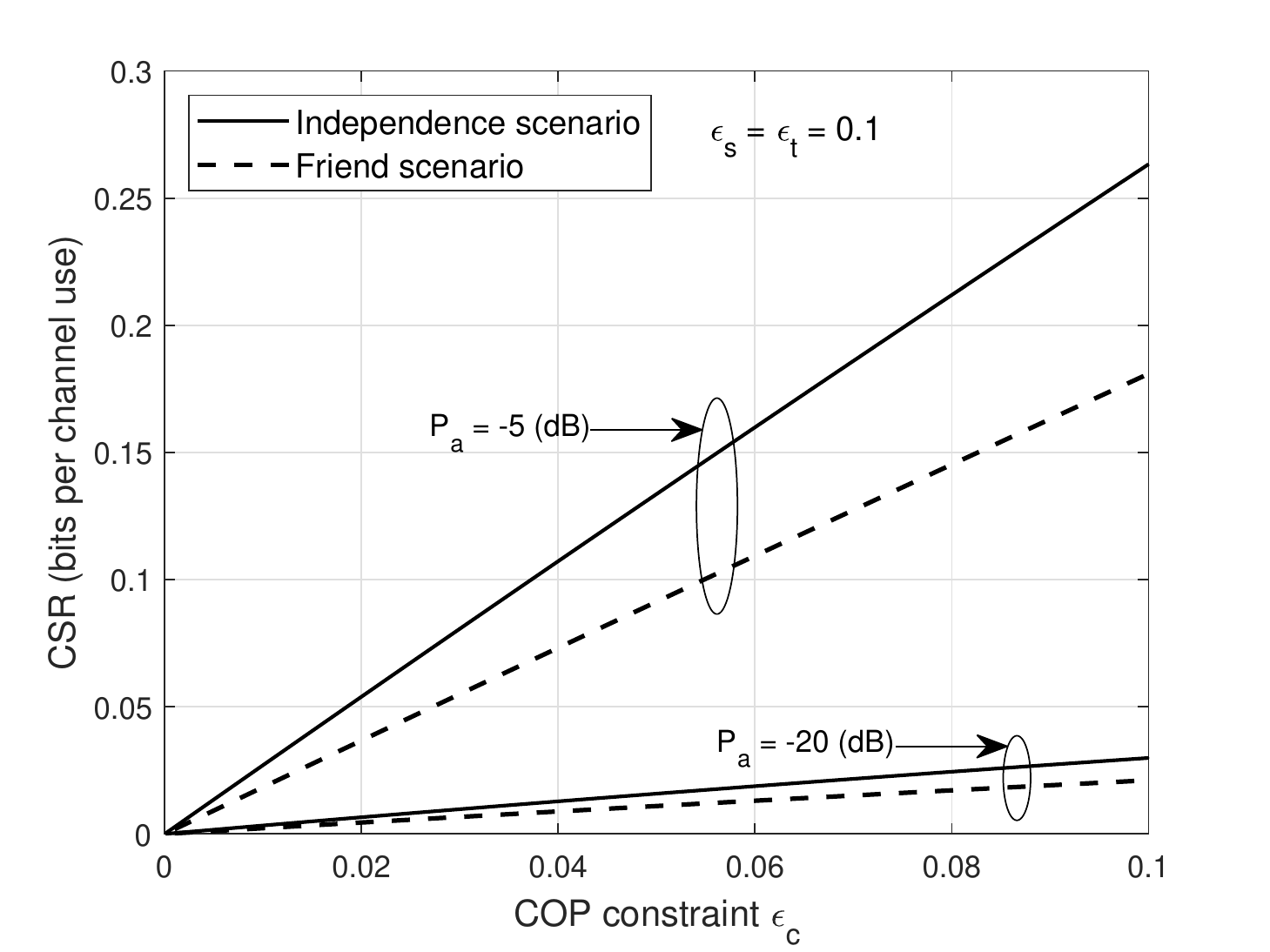}
	\label{AN_Rcs_Ec}}
\caption{Comparisons of the CSR performances in two relationship cases.}
\label{scenario}
\end{figure}

We proceed to compare the CSR performance achieved in the independence relationship scenario and that achieved in the friend relationship scenario, for which we show $R_{cs}$ vs. $\epsilon_c$ for both relationship scenarios under the PC-based transmission scheme in Fig. \ref{PC_Rcs_Ec} and those under the AN-based transmission scheme in Fig. \ref{AN_Rcs_Ec}, respectively. We set the SOP constraint and TP constraint to $\epsilon_s=\epsilon_t=0.1$ in both figures. In addition, we set the parameter $\upsilon$ to $\upsilon=0.01$ and $0.001$ in Fig. \ref{PC_Rcs_Ec} and the transmit power of Alice $P_a$ to $P_a=-5$ dB and $-20$ dB in Fig. \ref{AN_Rcs_Ec}. We can observe from both subfigures that the CSRs in the independence relationship case are always larger than those in the friend relationship case under all the parameter settings and both transmission schemes.
This is intuitive since Willie and Eve can improve their attacking abilities by sharing their signals.
The above observations indicate that being friends is the better choice than being independent for the eavesdropper group and detector group.

\begin{figure}[!t]
\centering
\subfigure[Independence relationship scenario.]{
	\includegraphics[width=0.4\textwidth]{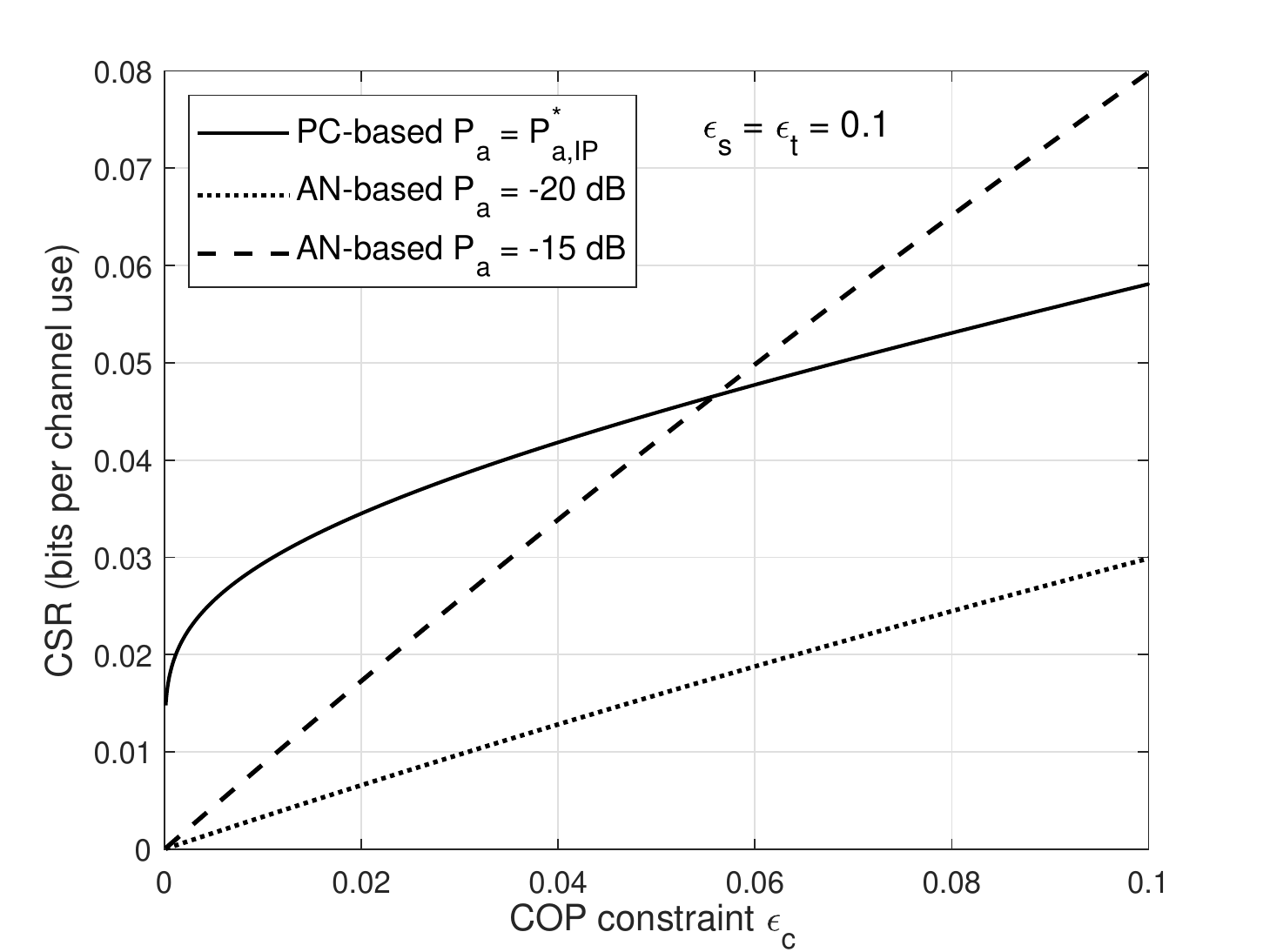}
	\label{Ind_Rcs_Ec}}
\subfigure[Friend relationship scenario.]{
	\includegraphics[width=0.4\textwidth]{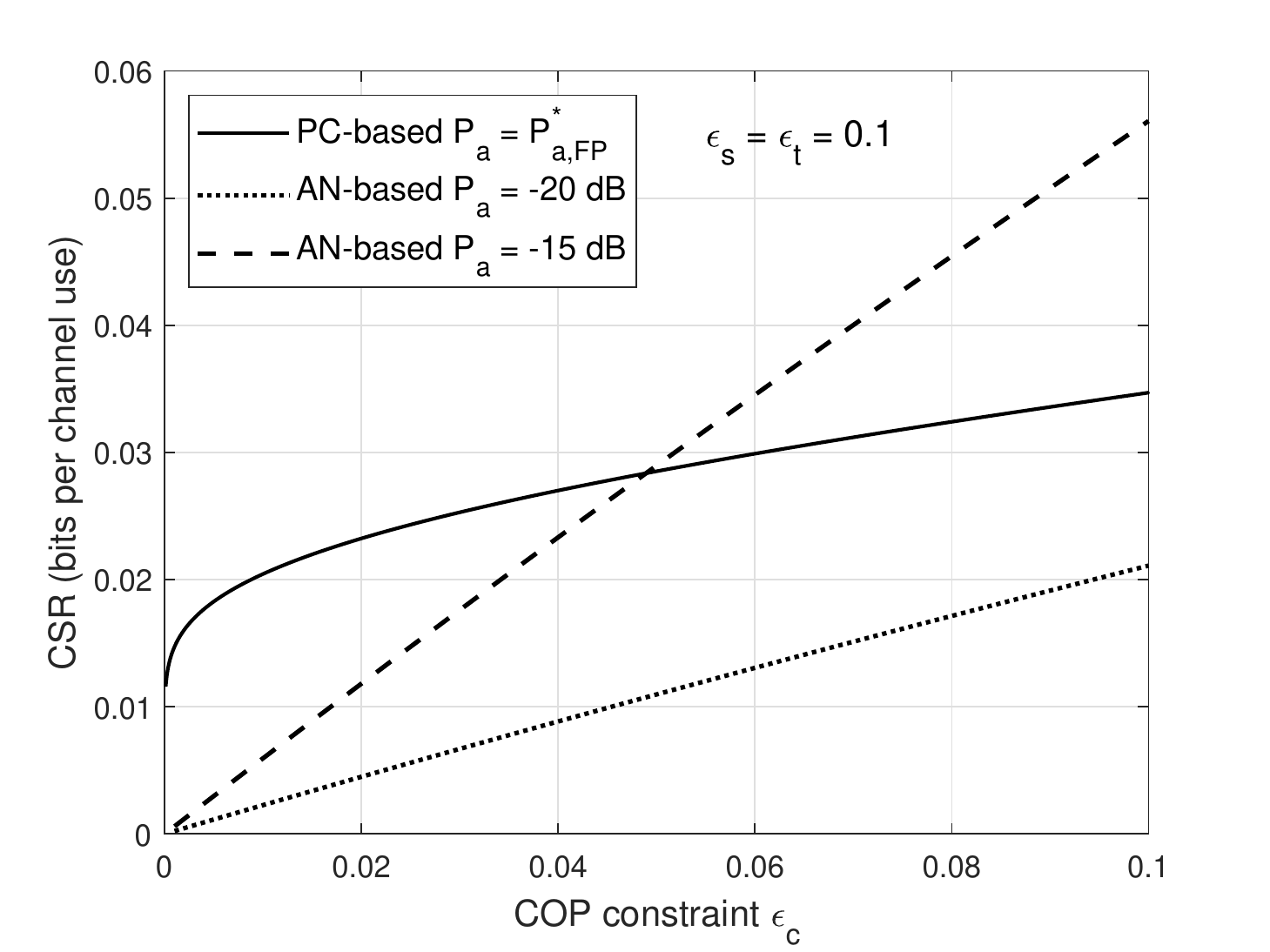}
	\label{Fri_Rcs_Ec}}
\caption{Comparisons of the CSR performances in the PC-based and AN-based transmission schemes ($R_{cs}$ vs. $\epsilon_c$).}
\label{transmission_Ec}
\end{figure}

\begin{figure}[!t]
\centering
\subfigure[Independence relationship scenario.]{
	\includegraphics[width=0.4\textwidth]{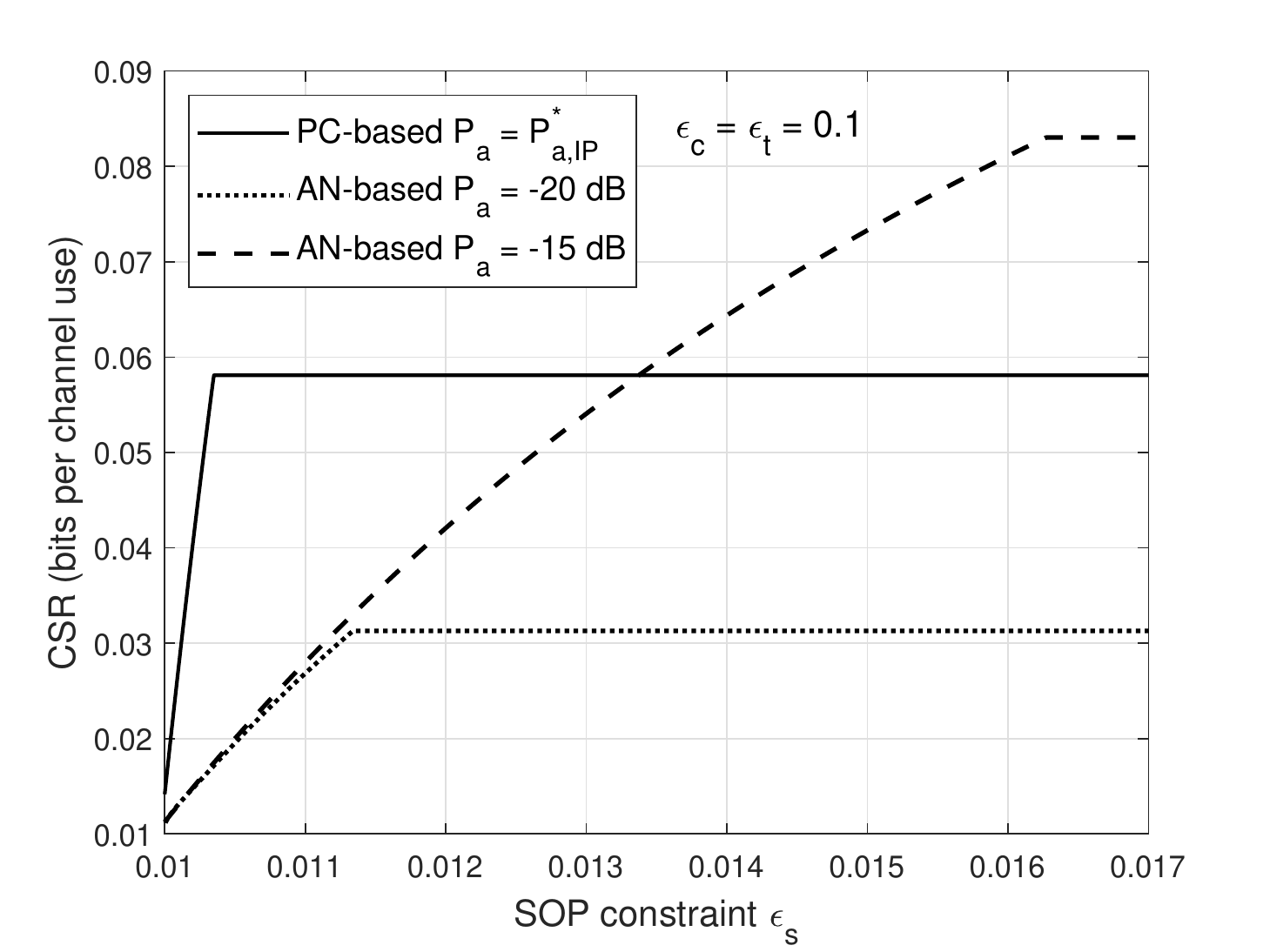}
	\label{Ind_Rcs_Es}}
\subfigure[Friend relationship scenario.]{
	\includegraphics[width=0.4\textwidth]{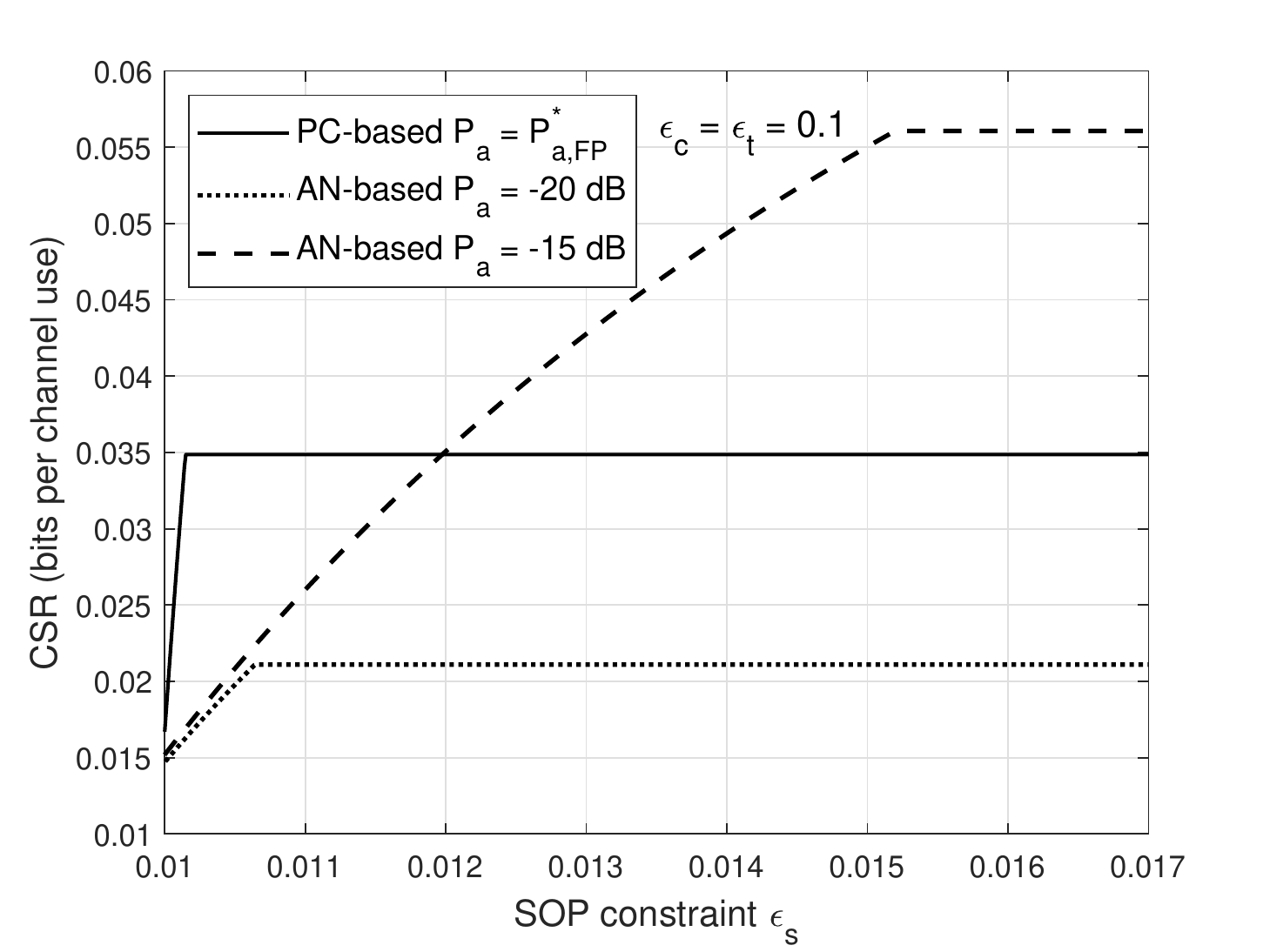}
	\label{Fri_Rcs_Es}}
\caption{Comparisons of the CSR performances in the PC-based and AN-based transmission schemes ($R_{cs}$ vs. $\epsilon_s$).}
\label{transmission_Es}
\end{figure}

\begin{figure}[!t]
\centering
\subfigure[Independence relationship scenario.]{
	\includegraphics[width=0.4\textwidth]{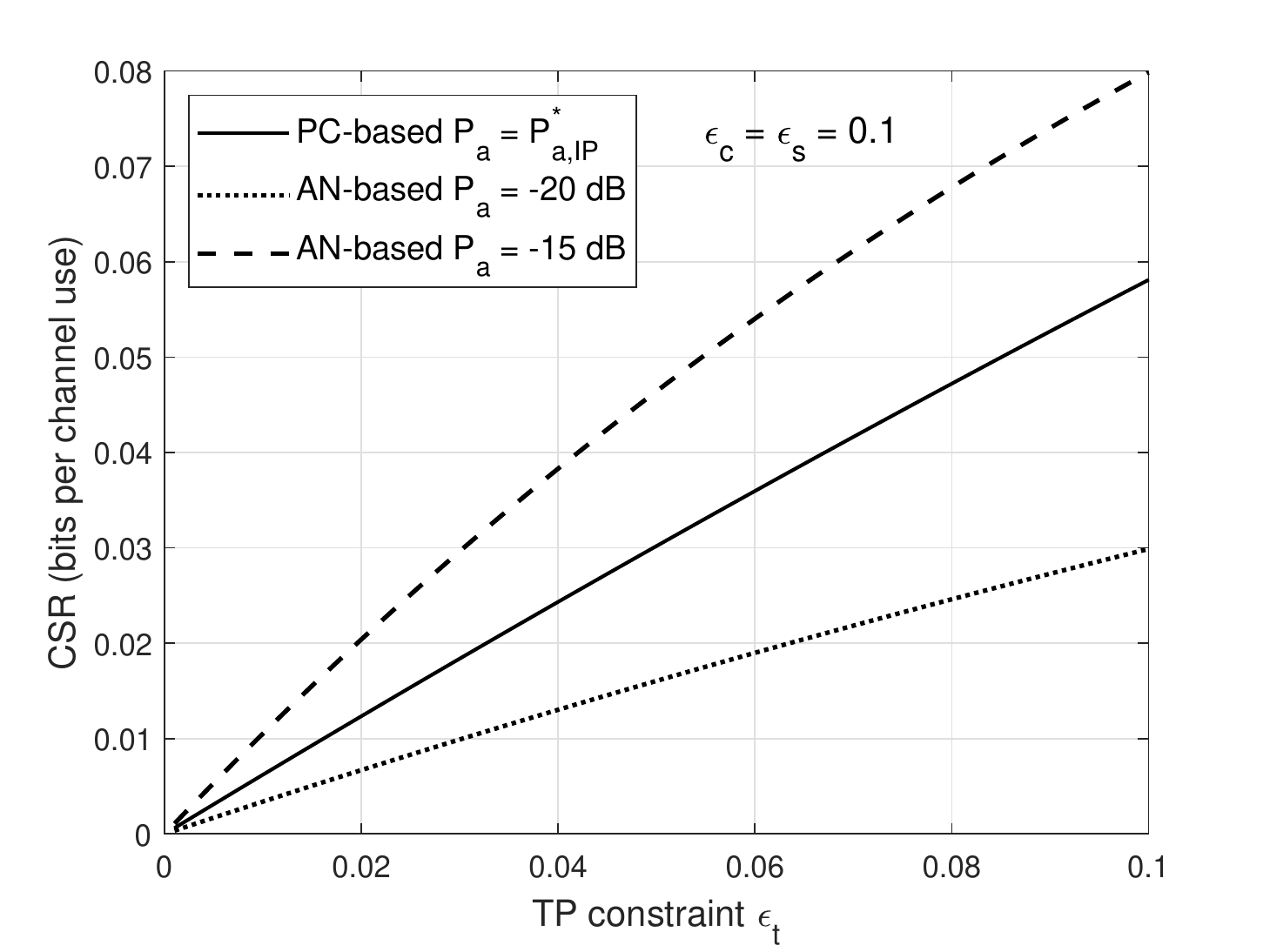}
	\label{Ind_Rcs_Et}}
\subfigure[Friend relationship scenario.]{
	\includegraphics[width=0.4\textwidth]{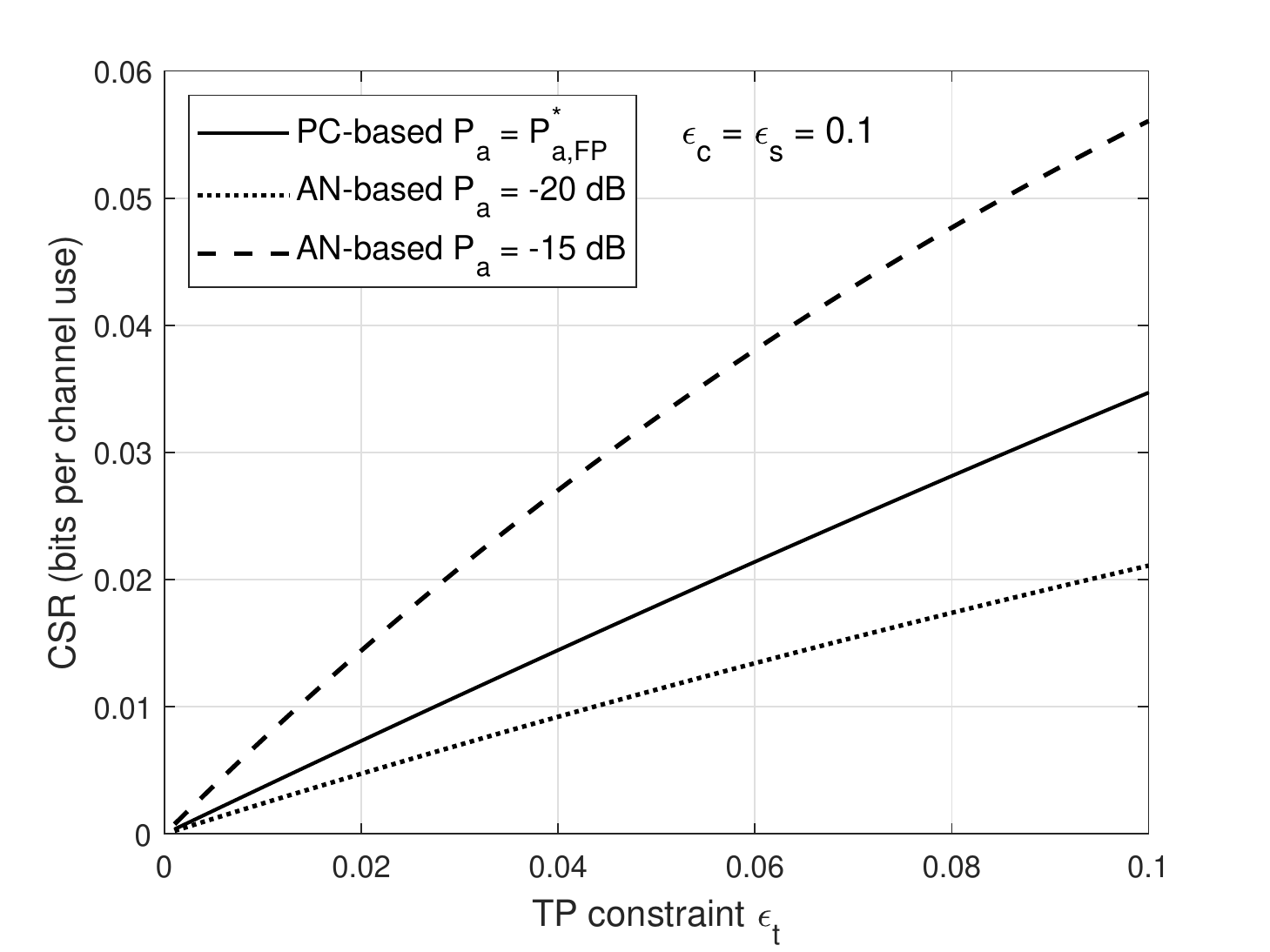}
	\label{Fri_Rcs_Et}}
\caption{Comparisons of the CSR performances in the PC-based and AN-based transmission schemes ($R_{cs}$ vs. $\epsilon_t$).}
\label{transmission_Et}
\end{figure}

Finally, we compare the PC-based transmission scheme and the AN-based transmission scheme in terms of the CSR performance. To do so, we show $R_{cs}$ vs. $\epsilon_c$ in Fig. \ref{transmission_Ec} (resp. $R_{cs}$ vs. $\epsilon_s$ in Fig. \ref{transmission_Es} and $R_{cs}$ vs. $\epsilon_t$ in Fig. \ref{transmission_Et}) under both transmission schemes in the independence and friend relationship scenarios, respectively.
We set  $\epsilon_s=\epsilon_t=0.1$ in Fig. \ref{transmission_Ec}, $\epsilon_c=\epsilon_t=0.1$ in Fig. \ref{transmission_Es} and  $\epsilon_c=\epsilon_s=0.1$ in Fig. \ref{transmission_Et}.
For each figure, we consider two different settings of the transmit power of Alice $P_a$ for the AN-based scheme.
We can observe from Fig. \ref{transmission_Ec} that, in both relationship scenarios, the PC-based scheme achieves better CSR performance than the AN-based scheme, when a small transmit power (e.g., $P_a=-20$ dB) is adopted in the AN-based scheme.  
However, when the transmit power of AN-based scheme is relatively larger (e.g., $P_a=-15$ dB), the PC-based scheme achieves better CSR performance than the AN-based scheme under stringent COP constraints (e.g., less than about $0.055$ in Fig. \ref{Ind_Rcs_Ec}), while the AN-based scheme achieves better CSR performance than the PC-based scheme under less strict COP constraints. 

Similar results can be obtained from Fig. \ref{transmission_Es}, which shows that the PC-based scheme outperforms the AN-based scheme if either the transmit power of the AN-based scheme or the SOP constraint is small. Otherwise, the AN-based scheme outperforms the PC-based scheme. However, the results obtained from Fig. \ref{transmission_Et} are different. We can see from Fig. \ref{transmission_Et} that the AN-based scheme outperforms the PC-based scheme when adopting a large transmit power (i.e., $P_a=-15$ dB), while it achieves worse CSR performance than the PC-based scheme when adopting a small transmit power  (i.e., $P_a=-20$ dB).
Based on the above observations from Fig. \ref{transmission_Ec}, Fig. \ref{transmission_Es} and \ref{transmission_Et}, we can conclude that when the transmit power is not a big concern, transmitters may prefer the AN-based transmission scheme to achieve better CSR performance, especially for less strict covertness, secrecy and transmission performance constraints. On the other hand, when the transmit power is constrained (e.g., in IoT and sensor networks), the PC-based scheme is more preferable for transmitters.

\section{Conclusion} \label{6}
This paper explores a new secure wireless communication paradigm, where the physical layer security technology is applied to ensure both the covertness and secrecy of the communication. We define a novel metric of covert secrecy rate (CSR) to depict the security performance of the new paradigm, and also provide solid theoretical analysis on CSR under two transmission schemes (i.e., artificial noise (AN)-based one and power control (PC)-based one) and two detector-eavesdropper relationships (i.e., independence and friend). The results in this paper indicate that in general the CSR performance can be improved when the constraints on covertness, secrecy and transmission performance become less strict. In particular, the PC-based transmission scheme outperforms the AN-based transmission scheme in terms of the CSR performance when strict constraints are applied to the covertness, secrecy and transmission performance. On the other hand, when these constraints become less strict, the AN-based scheme may achieve better CSR performance than the PC-based one by properly adjusting the message transmit power. We expect that this work can shed light on the future studies of new secure wireless communication paradigms.


\bibliography{acl}
\bibliographystyle{IEEEtran}


%

\end{document}